\documentclass[11pt]{article}
\usepackage{graphics}
\usepackage{latexsym}
\usepackage{amsfonts}
\setlength{\oddsidemargin}{0.in}
\setlength{\evensidemargin}{0.in}
\setlength{\topmargin}{0.0in}
\setlength{\textheight}{8.5in}
\setlength{\textwidth}{6.5in}
\setlength{\parskip}{1ex}
\setcounter{secnumdepth}{4}

\usepackage{epstopdf}
\DeclareGraphicsRule{.tif}{png}{.png}{`convert #1 `dirname #1`/`basename #1 .tif`.png}

\usepackage{appendix}
\usepackage{caption}
\usepackage{subcaption}
\usepackage{graphicx}
\usepackage{amsmath}
\usepackage{amsthm}
\usepackage{amssymb}
\newtheorem{definition}{Definition}
\newtheorem{theorem}{Theorem}

\newtheorem{lemma}{Lemma}

\newtheorem{proposition}{Proposition}
\newtheorem{remark}{Remark}

\begin{document}
\begin{titlepage}
\Large
\begin{center}
				{\bf \Huge
				A Passivity-Based Design for Stability and Robustness in Event-Triggered Networked Control Systems with Communication Delays, Signal Quantizations and Packet Dropouts
				}
\end{center}
\begin{center}
				Technical Report of the ISIS Group\\
				at the University of Notre Dame\\
				ISIS-2016-003\\ 
				May 2016 
\end{center}
\vspace{1.5in}		
\begin{center}
				Arash Rahnama$^1$, Meng Xia$^1$ and Panos J. Antsaklis$^1$\\
				$^1$Department of Electrical Engineering\\
				University of Notre Dame\\
				Notre Dame, IN 46556\\
\end{center}
\vspace{.8in}
\begin{center}
				{\bf \Large Interdisciplinary Studies in Intelligent Systems}
\end{center}
\vspace{0.5in}
\begin{center}
				{\textbf {Acknowledgements} The support of the National Science Foundation under the CPS Grant No. CNS-1035655 and CNS-1446288 is gratefully acknowledged.
				}
\end{center}
\end{titlepage}
\clearpage
		
\begin{abstract}
		In this report, we introduce a comprehensive design framework for Event-Triggered Networked Control Systems based on the passivity-based concept of Input Feed-Forward Output Feedback Passive (IF-OFP) systems. Our approach is comprehensive in the sense that we show finite-gain $L_2$-stability and robustness for the networked control system by considering the effects of time-varying or constant network induced delays, signal quantizations, and data losses in communication links from the plant to controller and the controller to plant. Our design is based on the need for a more efficient utilization of band-limited shared communication networks which is a necessity for the design of Large-Scale Cyber-Physical systems. To achieve this, we introduce simple triggering conditions that do not require the exact knowledge of the sub-systems and are located on both sides of the communication network: the plant's output and the controller's output. This specifically leads to a great decrease in the communication rate between the controller and plant. Additionally, we show lower-bounds on inter-event time intervals for the triggering conditions and show the design's robustness against external noise and disturbance. We illustrate the relationship amongst stability, robustness and passivity levels for the plant and controller. We analyze our design's robustness against packet dropouts and loss of communication. Our results are design-oriented in the sense that based on our proposed framework, the designer can easily observe the trade-offs amongst different components of the networked control system, time-varying delays, effects of signal quantizations and triggering conditions, stability, robustness and performance of networked control system and make design decisions accordingly.
\end{abstract}
\clearpage

\section{Introduction}
Networked Control Systems (NCS) provide the possibility of controlling different sub-systems through unique remote controllers that may be situated at different locations away from each other. Under this framework, a shared multi-purpose communication network provides the platform on which, the controllers and plants exchange information to further the overall goal of large-scale compositional control systems such as cyber-physical systems (CPS). NCSs are spatially distributed and the communication amongst sensors, actuators and controllers happen through a shared communication network (figure \ref{fig:NCS}). The idea that one can design multi-purpose controllers that are located remotely and are able to satisfy complicated application goals and needs by utilizing different sub-systems and plants is extremely desirable and even necessary for many of today's technological advances and necessities. NSCs provide flexibility and low cost of design, efficiency and ease of maintenance. As a result, they are amongst one of the top topics that recently have been enjoying the research spotlight in fields such as control and communication theory   \cite{hespanha2007survey,baillieul2007control}. Additionally, NCS designs and their benefits are utilized in a large range of areas such as  unmanned aerial vehicles \cite{seiler2005h}, remote surgery \cite{meng2004remote}, mobile sensor networks \cite{ogren2004cooperative}, and haptics collaboration over the internet\cite{ogren2004cooperative,hikichi2002evaluation}. A comprehensive survey on recent developments in this field can be found in \cite{hespanha2007survey, baillieul2007control}.

Applying communication links between the controller and plant - as opposed to traditional dedicated connections - to achieve control goals brings about several challenges such as information loss, band-limited channels and  limited communication capabilities for shared networks, delays and signal quantizations. Recent technological advances which has led to the production of cheap, small and fast microprocessors with powerful control capabilities, however, has made it possible for engineers to try and solve these challenges. Delays in NCS' communication links can be highly variable due to variant network access times and transmission delays. Some results in the literature dealing with the problem of delays propose upper bounds for allowable delays called the maximum allowable transfer interval (MATI) \cite{walsh2002stability}. In \cite{gao2008new}, authors try to model the delays in a networked control system to produce a  Lyapunov-based method with an acceptable $H^{\infty}$ performance index. In \cite{lam2007stability}, authors come up with LMI-based solutions for additive time-varying delays in networked control systems - some other approaches for dealing with time delays in NCS are given in \cite{branicky2000stability,lin2003robust}. The problem of lost information, or packet dropouts and signal quantization have also been addressed in the literature. In \cite{tsumura2009tradeoffs}, authors discuss the trade-off between quantization and packet dropouts - a similar line of work is presented in \cite{qu2015event}. The author in \cite{fu2008quantization} provides a thorough analysis of negative effects of quantization on stability and estimation in networked control systems. An $H^{\infty}$ output feedback control method is given in \cite{jiang2013h} for nonlinear networked control systems with delays and packet dropouts. An LMI-based approach for dealing with packet dropouts and delays in networked control systems was proposed in \cite{yu2004lmi}. In \cite{xiong2007stabilization}, authors address the stabilization of networked control systems in the presence of arbitrary packet dropouts or packet dropouts that follow Markovian patterns, and propose a Lyapunov-based solution.

Classical networked control designs rely on the traditional periodic task of control \cite{ragazzini1958sampled, aastrom2013computer}. However, the results under this approach are quite conservative and require an abundance of recourses in terms of CPU usage, and communication rate. Consequently,  event-triggered control has been proposed as a more efficient alternative. In contrast to the classic periodical sensing and actuation control, under an event-based framework, information between the plant and controller is only exchanged when it is necessary. This usually happens when a certain controlled value in the system deviates from its desired value for larger than a certain threshold. In other words, the information is exchanged only when something "significant" happens, and control is not executed unless it is required. Under this framework, one can obtain most control objectives by an open loop controller, while uncertainty is inevitable in real systems, a close-loop event-based framework can robustly deal with these uncertainties. As a result, event-based control over networks has regained research interest since it creates a better balance between control performance, communication, and computational load  compared to the time-based counterpart \cite{aastrom1999comparison}. Similar works in the literature under different names but based on the same core concept and only  with slight differences are event-based sampling \cite{astrom2008event}, event-driven sampling \cite{heemels2008analysis}, state-triggering sampling \cite{tabuada2007event} and self-triggering sampling \cite{wang2010self, wang2011event}. A comprehensive survey on event-triggered control is given in \cite{lemmon2010event}. 

Most works in the field of event-triggered networked control can be divided into two categories: 

1) Stability in terms of input-to-state or ISS-stability \cite{sontag1989smooth,tabuada2007event, anta2010sample}, where the connection between stability and states of the system is explored\textemdash This approach relies on the full knowledge of states of each sub-system which can be impossible in some cases. 

2) Stability in terms of finite-gain input-output stability \cite{yu2013event,kofman2006level, zhu2014passivity, yu2011event}.

A lot of works in event-triggered networked control literature assume perfect communication links with no delays, and fail to take into consideration the effects of signal quantizations. Additionally, the robustness issues such as packet dropouts, disturbance and a full analysis of inter-event time intervals \textemdash insurance against continuous exchange of data between sub-units or what is known as "Zeno" behavior\textemdash are not fully explored and understood in the literature for event-triggered networked control systems.  Time-delays are considered only when there is an upper-bound assumptions on them which can be seen as unrealistic in practical applications. In this report, we take advantage of the theory of dissipativity, \emph{QSR-dissipativity} and passivity to design robust event-triggered networked control systems. Passivity and dissipativity encompass the energy consumption characterizations of a dynamical system. Passivity is preserved under parallel and feedback interconnections \cite{bao2007process}. Passivity also implies stability under mild assumptions \cite{bao2007process,khalil2002nonlinear} making them a great alternative for designing compositional large-scale networked control systems.

Motivated by the previous works of our colleagues in \cite{yu2011event,yu2013event}, in this report, we will propose an event-triggered networked control system design that can ensure $L_2$-stability in the presence of signal quantizations, packet dropouts, and time-varying or constant delays. Similar to our previous work in \cite{rahnama2016qsr}, we consider a large class of systems known as Input Feed-Forward Output Feedback Passive (IF-OFP) systems, and derive triggering conditions based on a passivity-based approach which allows us to include a big class of systems with negative or positive known passivity indices. Our design includes event-triggering conditions that are located on both the plant's output side, and controller's output side. This leads to a great decrease in the required rate of communication amongst sub-systems, and lessens the communication load on the shared band-limited network. 

In our previous work in \cite{rahnama2016qsr}, we analyzed the \emph{QSR-dissipativity} and passivity for event-triggered networked control systems: we showed \emph{QSR-dissipativity} and stability conditions for different triggering frameworks based on theorem 1, and calculated passivity indices for the networked control interconnections with different triggering conditions. While in our work in \cite{rahnama2016qsr} we included the negative effects of disturbance and noise, we did not take into account the effects of signal quantizations, and time-delays in networked control frameworks. In our current work, we intend to propose a more comprehensive design that considers the effects of signal quantizations, constant or time-varying communication delays, and packet dropouts.

Under our proposed design, finite-gain $L_2$-stability is guaranteed and a transformation matrix $M$ is applied on the plant's side of the communication links to deal with the presence of signal quantizations and constant or time-varying delays with bounded rates of change. The delays existing in both communication links: from the plant to controller and from the controller to plant, are considered. In addition, these delays can be larger than inter-event time intervals. Our intention is to design the $M$-Transformation matrix such that the negative effects of signal quantizations, packet dropouts and time delays on both sides are overcome, and as a result the entire system is both finite-gain $L_2$-stable and robust. Our work is influenced by the study done in \cite{yu2013event}, however we seek to implement a design-based approach with easily understood and simple triggering conditions that are located on both sides of the networked control system. There are two benefits to this approach: firstly, the designer will be able to see a clear trade-off between the communication rate and performance and make design decisions accordingly; secondly, the triggering conditions are located on both sides of the network, this tremendously decreases the communication load on the shared band-limited network, specifically we show that we can greatly decrease the rate of communication from the controller to plant. This can help with the efficiency of the ever-increasing number of sub-units in large-scale CPS that share the same communication network.

Additionally, our work can be applied to linear and nonlinear systems and does not rely on the exact knowledge of sub-systems; only an access to inputs and outputs of sub-systems is enough. Moreover, we address robustness issues with respect to imperfect communication links and network uncertainties. We show a lower-bound (lack of Zeno-effect) for inter-event time intervals based on conic properties of IFOF passive systems. Our approach requires no additional assumptions on the linear or nonlinear system's behavior. These results also clarify the robustness of each triggering condition against external noise and disturbance. We also analyze the performance of our design under data losses and packet dropouts. We show a trade-off between the design parameters and the number of allowable consecutive packet dropouts for maintaining finite-gain $L_2$-stability.   

This report is organized as the following: section \ref{sec:qsrpassivty} gives the preliminarily mathematical definitions on dissipativity and passivity. The problem statement being addressed is clearly stated in section \ref{sec:prob}. Section \ref{sec:structure} illustrates the general networked control framework \textemdash including the event-triggering conditions, signal quantizations, and time-delays\textemdash  that our work is based on.  Section \ref{sec:main} includes our results on passivity, event-triggering conditions, and $L_2$-stabilty and also the proposed transformation matrix design. In section \ref{sec:robustness} a thorough analysis of inter-event time intervals based on conic properties of IFOF systems, and robustness issues with respect to packet dropouts, and external disturbances is given. An illustrative example is given in section \ref{sec:simulation}. Section \ref{sec:conclusion} gives an overview of what was achieved in the report and concludes the report.  
\clearpage

\section{Mathematical Preliminaries}

\begin{figure}[!t]
	\centering
	\includegraphics[scale = 0.7]{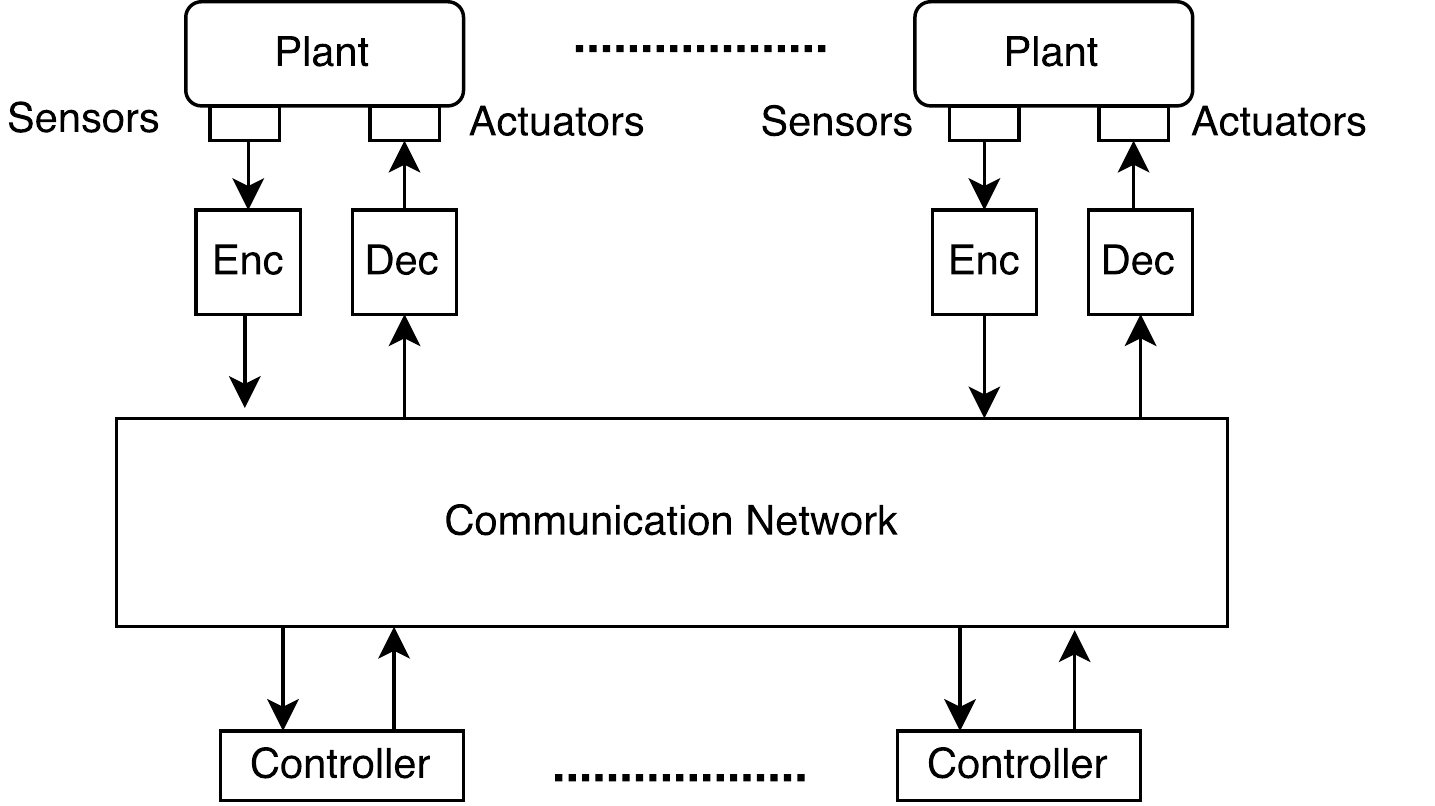}
	\caption{A Networked Control System Framework.}
	\label{fig:NCS}
\end{figure}   

\label{sec:qsrpassivty}
Consider the following linear or nonlinear dynamical system $G$, 
\begin{align}
	\label{dynsys}
	G:
	\begin{cases}
	\dot{x}(t)= f(x(t),u(t)) &  \\
	y(t)= h(x(t),u(t)), & 
	\end{cases}
\end{align}

where $x(t) \in X \subset R^n$, and $u(t) \in U \subset R^m$, and $y(t) \in Y \subset R^m$  are respectively the state, input and output of the system, and $X$, $U$ and $Y$ are respectively the state, input and output spaces. 
	\begin{definition}
	\label{supply}(\cite{willems1972dissipative}) The supply rate $\omega(u(t),y(t))$ is a well-defined supply rate, if for all $t_0$, $t_1$ where $t_1\geq t_0$, and all solutions $x(t)\in X$, $u(t)\in U$, and $y(t)\in Y$ of the dynamical system, we have:
		\begin{align}
			\label{eq:supp}
			\int_{t_0}^{t_1}|\omega(u(t),y(t))| dt < \infty
		\end{align}
   \end{definition}

Dissipativity and passivity are energy-based notions that characterize a dynamical system by its input/output behavior. A system is dissipative or passive if the increase in the system's stored energy is less than the entire energy supplied to it. The energy supplied to the system is defined by the supply function, and the energy stored in the system is defined by the storage function:
\begin{definition}
\label{diss}(\cite{willems1972dissipative}) System $G$ is dissipative with respect to the well-defined supply rate $\omega(u(t),y(t))$, if there exists a nonnegative storage function $V(x): X \to R^+$ such that for all $t_0$, $t_1$ where $t_1\geq t_0$, and all solutions $x(t)\in X$, $u(t)\in U$, and $y(t)\in Y$ of the dynamical system:
   \begin{align}
	\label{eq:dissipative}
	V(x(t_1))-V(x(t_0)) \leq \int_{t_0}^{t_1}\omega(u(t),y(t)) dt 
	\end{align}
is satisfied. If the storage function is differentiable, then \eqref{eq:dissipative} can be written as:
	\begin{align}
	\label{eq:dissipative2}
	\dot{V}(x(t)) \leq \omega(u(t),y(t)),~ \forall t\geq0 
	\end{align} 
\end{definition}

Accordingly, we call a system \emph{QSR-disspative} if it is dissipative with respect to the well-defined supply rate:
\begin{align}
	\label{eq:QSR}
    \omega(u(t),y(t)) = y^T(t) Q y(t) + 2 y^T(t) S u(t) + u^T(t) R u(t), 
\end{align}

where $Q$, $R$, and $S$ are constant matrices of appropriate dimensions, and $Q$ and $R$ are symmetric \cite{hill1976stability}. 
\begin{definition}(\cite{khalil2002nonlinear})
System $G$ is called finite-gain $L_2$-stable, if there exists a positive-semi definite function $V(x): X \to R^+$, and a scalar constant $\gamma>0$ such that for all $u(t)\in U$, and $y(t)\in Y$ and $t_1>0$
	\begin{align}
	 \label{eq:l2}
	V(x(t_1))-V(x(0)) \leq \int_{0}^{t_1}(\gamma ^2 u^T(t)u(t)-y^T(t)y(t)) dt 
	\end{align}
or if for the smallest possible gain $\gamma$, $\forall u(t)\in U$, a $\beta$ exists such that over the time interval $[0,\tau]$ we have:
	\begin{align*}
&||y_{\tau}||_{L2} \leq \gamma||u_{\tau}||_{L2} + \beta
	\end{align*}
\end{definition}

The relation between \emph{QSR-disspativity} and finite-gain $L_2$-stability is well-established:
\begin{theorem}\label{stab}(\cite{hill1976stability}) If system $G$ is \emph{QSR-disspative} with $Q<0$, then it is $L_2$-stable.
\end{theorem}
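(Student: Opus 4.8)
The plan is to read the finite-gain $L_2$ estimate straight out of the \emph{QSR}-dissipation inequality by a completion-of-squares / Young's-inequality argument that exploits the sign condition $Q<0$. Since $G$ is \emph{QSR-dissipative}, Definition~\ref{diss} and \eqref{eq:QSR} give a nonnegative storage function $V$ with $V(x(t_1)) - V(x(t_0)) \le \int_{t_0}^{t_1}\big(y^TQy + 2y^TSu + u^TRu\big)\,dt$ along every admissible trajectory. The first step is to dominate the integrand pointwise by an expression of the form $-c_1\|y\|^2 + c_2\|u\|^2$ with $c_1>0$ and $c_2\ge 0$.

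For that step I would use that $Q$ is symmetric and negative definite, so $y^TQy \le \lambda_{\max}(Q)\|y\|^2 = -a\|y\|^2$ with $a := -\lambda_{\max}(Q)>0$, while $u^TRu \le \|R\|\,\|u\|^2$ (any operator-norm bound on the symmetric matrix $R$ suffices; no sign assumption on $R$ or $S$ is needed). The cross term is handled by Young's inequality: $2y^TSu \le 2\|S\|\,\|y\|\,\|u\| \le \tfrac{a}{2}\|y\|^2 + \tfrac{2\|S\|^2}{a}\|u\|^2$. Summing these bounds yields $y^TQy+2y^TSu+u^TRu \le -\tfrac{a}{2}\|y\|^2 + \big(\|R\| + \tfrac{2\|S\|^2}{a}\big)\|u\|^2$, i.e. the desired estimate with $c_1 = a/2$ and $c_2 = \|R\| + 2\|S\|^2/a$.

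The remaining steps are routine. Integrating the dissipation inequality on $[0,t_1]$ and using $V(x(t_1))\ge 0$ gives $c_1\int_0^{t_1}\|y\|^2\,dt \le V(x(0)) + c_2\int_0^{t_1}\|u\|^2\,dt$. Dividing by $c_1$ and taking square roots via $\sqrt{p+q}\le\sqrt p+\sqrt q$ gives $\|y_{t_1}\|_{L_2} \le \sqrt{c_2/c_1}\,\|u_{t_1}\|_{L_2} + \sqrt{V(x(0))/c_1}$, which is exactly finite-gain $L_2$-stability with gain $\gamma = \sqrt{c_2/c_1}$ and bias $\beta = \sqrt{V(x(0))/c_1}$; equivalently, using $\tilde V := V/c_1$ as the storage function reproduces the inequality form stated in the definition with $\gamma^2 = c_2/c_1$.

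I do not expect a genuine obstacle. The only place that needs care is keeping the coefficient of $\|y\|^2$ strictly positive after absorbing the cross term, and this is precisely where $Q<0$ (rather than merely $Q\le 0$) enters: it supplies the strictly positive margin $a$ that Young's inequality then splits between the $\|y\|^2$ and $\|u\|^2$ terms. A secondary bookkeeping point is that matching the exact normalization in the $L_2$-stability definition (coefficient one on $y^Ty$) requires the rescaling of the storage function noted above.
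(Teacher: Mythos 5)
Your argument is correct and is essentially the classical Hill--Moylan proof of this result; the paper itself offers no proof and simply cites \cite{hill1976stability}, where the same completion-of-squares/Young's-inequality estimate is used, with $Q<0$ supplying the strictly positive margin $a=-\lambda_{\max}(Q)$ that absorbs the cross term. The bookkeeping (rescaling the storage function to match the normalization of the $L_2$-gain definition) is also handled correctly.
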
		
\begin{definition}(\cite{bao2007process}) As a special case of dissipativity, system $G$ is called passive, if there exists a nonnegative storage function $V(x): X \to R^+$ such that:
\label{pass}
\begin{align}
	\label{eq:passivity}
	V(x(t_1))-V(x(t_0)) \leq \int_{t_0}^{t_1} u^T(t)y(t) dt 
\end{align}
is satisfied for all $t_0$, $t_1$ where $t_1\geq t_0$, and all solutions $x(t)\in X$, $u(t)\in U$, and $y(t)\in Y$ of the dynamical system. If the storage function is differentiable, then \eqref{eq:passivity} can be written as:
\begin{align}
	\label{eq:passivity2}
	\dot{V}(x(t)) \leq u^T(t)y(t), \forall t\geq0 
\end{align}
\end{definition}

Under certain conditions, passivity coincides with input/output stability, and for zero-state detectable dynamical systems, guarantees the stability of the origin \cite{khalil2002nonlinear}.
\begin{definition}(\cite{khalil2002nonlinear})\label{IFOF} System $G$ is considered to be Input Feed-Forward Output Feedback Passive (IF-OFP), if it is dissipative with respect to the following well-defined supply rate:
\label{IFOF}
\begin{align}
	\label{eq:IFOFP}
	\omega(u,y)=u^Ty-\rho y^Ty-\nu u^Tu, \forall t\geq0,
\end{align}
for some $\rho, \nu \in R$.
\end{definition}

IF-OFP property presents a more general form for the concept of passivity. Based on definition \ref{IFOF}, we can denote an IF-OFP system with IF-OFP($\nu$,$\rho$). $\nu$ is called the input passivity index and $\rho$ is called the output passivity index. Passivity indices are a means to measure the shortage and excess of passivity in dynamical systems \cite{khalil2002nonlinear}, and are useful in passivity-based analysis and control of systems \cite{yu2013event, xia2014passivity}. A positive value for either one of two passivity indices points to an excess in passivity; and a negative value for either of two passivity indices points to a shortage in passivity.  An excess of passivity in one system can compensate for the shortage of passivity in another system leading to a passive feedback or feed-forward interconnection \cite{hirche2012human}. Moreover, passivity indices can be useful for analyzing the performance of passive systems. Further, if only $\nu>0$, then the system is said to be \emph{input strictly passive} (ISP); if only $\rho>0$, then the system is said to be \emph{output strictly passive} (OSP). Similarly, if $\nu>0$ and $\rho>0$, then the system is said to be \emph{very strictly passive} (VSP). 
\begin{lemma}
(\cite{matiakis2006novel}) Without loss of generality, the domain of $\rho$ and $\nu$ in Input Feed-Forward Output Feedback Passive systems (\ref{eq:IFOFP}) is $\Pi=\Pi_1\cup\Pi_2$ with $\Pi_1=\{\rho, \nu \in R|\rho\nu<\frac{1}{4}\}$ and $\Pi_2=\{\rho, \nu \in R|\rho\nu=\frac{1}{4}\ with~ \rho\geq0\}$
\end{lemma}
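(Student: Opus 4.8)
The plan is to read the IF-OFP supply rate \eqref{eq:IFOFP} as a quadratic form in the stacked vector $(y,u)$ and to let everything hinge on the sign of the single scalar $\rho\nu-\tfrac14$. Writing
\begin{align}
\omega(u,y)=u^Ty-\rho y^Ty-\nu u^Tu
 =-\begin{bmatrix} y \\ u \end{bmatrix}^{\!T}
   \begin{bmatrix} \rho I & -\tfrac12 I \\ -\tfrac12 I & \nu I \end{bmatrix}
   \begin{bmatrix} y \\ u \end{bmatrix},
\end{align}
the $2\times 2$ core matrix $\widetilde{\mathcal M}(\nu,\rho)=\begin{bmatrix}\rho & -\tfrac12 \\ -\tfrac12 & \nu\end{bmatrix}$ has $\det\widetilde{\mathcal M}=\rho\nu-\tfrac14$ and $\mathrm{tr}\,\widetilde{\mathcal M}=\rho+\nu$, so the regimes $\rho\nu<\tfrac14$, $\rho\nu=\tfrac14$, $\rho\nu>\tfrac14$ are exactly the cases in which $\widetilde{\mathcal M}$ is indefinite, singular, or sign‑definite. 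The first thing I would record is a monotonicity property of IF-OFP indices: if $G$ is IF-OFP$(\nu,\rho)$ and $\nu_1\le\nu$, $\rho_1\le\rho$, then $G$ is also IF-OFP$(\nu_1,\rho_1)$, because $\omega_{(\nu_1,\rho_1)}(u,y)-\omega_{(\nu,\rho)}(u,y)=(\rho-\rho_1)\|y\|^2+(\nu-\nu_1)\|u\|^2\ge 0$ forces $\dot V\le\omega_{(\nu,\rho)}\le\omega_{(\nu_1,\rho_1)}$. Hence the set of admissible index pairs of a fixed system is closed under decreasing either index.

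With these two ingredients I would show that any admissible pair can be replaced by one inside $\Pi$. If $\rho\nu<\tfrac14$ the pair already lies in $\Pi_1$, and if $\rho\nu=\tfrac14$ with $\rho\ge 0$ it lies in $\Pi_2$; so the only cases to dispose of are $\rho\nu>\tfrac14$, and $\rho\nu=\tfrac14$ with $\rho<0$. When $\rho\nu>\tfrac14$ the indices have the same sign. If both are positive then $\widetilde{\mathcal M}$ has positive determinant and positive trace, so $\omega(u,y)<0$ for every $(u,y)\neq 0$; together with $\dot V\le\omega$ and $V\ge 0$ this is impossible for a system with any nontrivial response (integrating against a constant nonzero input would drive $V\to-\infty$), and in any event monotonicity lets one decrease $\nu$ down to $0\in\Pi_1$ or to $\tfrac1{4\rho}\in\Pi_2$. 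If both indices are negative — this also subsumes the leftover locus $\rho\nu=\tfrac14$, $\rho<0$ — then $\widetilde{\mathcal M}$ is negative (semi)definite, so $\omega(u,y)\ge 0$ for all $(u,y)$ and the dissipation inequality only records a (possibly large) shortage of passivity; here I would relax one index toward zero until $\widetilde{\mathcal M}$ first ceases to be sign‑definite, i.e. until the hyperbola $\rho\nu=\tfrac14$ is reached, which yields an admissible characterization with parameters in $\Pi_1$ (or in $\Pi_2$ when the crossing happens with nonnegative output index). Collecting the cases gives that every IF-OFP system admits an index pair in $\Pi=\Pi_1\cup\Pi_2$, and the choice $\rho\ge 0$ on the boundary is precisely what keeps the degenerate ``single‑gain‑line'' case on the passive side.

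The step I expect to be the genuine obstacle is the last one, the region $\rho\nu>\tfrac14$ with both indices negative. There $\widetilde{\mathcal M}$ is negative definite, $\omega$ is a positive‑definite form, and the dissipation inequality is the weakest possible passivity certificate, so monotonicity alone — which only pushes the indices further into that region — does not bring one back to $\Pi$; one must instead argue that such a pair is never informative, i.e. that whenever $G$ satisfies \eqref{eq:IFOFP} with a third‑quadrant pair of product exceeding $\tfrac14$ it also satisfies it for a less conservative pair with $\rho\nu\le\tfrac14$. Establishing this cleanly uses the structure of $G$ itself (the feed‑through term, and the fact that $\dot V$ grows no faster than linearly in the input along control‑affine dynamics) rather than the algebra of the supply rate alone, and is where I would concentrate the work; the remaining cases follow at once from the determinant computation together with the monotonicity property.
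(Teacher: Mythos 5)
First, a point of reference: the paper does not prove this lemma at all --- it is imported verbatim from \cite{matiakis2006novel} --- so there is no in-paper argument to compare yours against; your proposal has to stand on its own. Your determinant/trace reading of the supply rate, the monotonicity observation ($\dot V\le\omega_{(\nu,\rho)}\le\omega_{(\nu_1,\rho_1)}$ for $\nu_1\le\nu$, $\rho_1\le\rho$), and the disposal of the first-quadrant region $\rho,\nu>0$, $\rho\nu>\tfrac14$ (negative-definite supply rate, incompatible with $V\ge0$ under persistent excitation, and in any case reducible to $\Pi$ by decreasing $\nu$) are all correct.

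The gap is exactly where you say it is, but it is worse than an unfinished step: the repair you propose for the third-quadrant region cannot work. You suggest showing that ``whenever $G$ satisfies \eqref{eq:IFOFP} with a third-quadrant pair of product exceeding $\tfrac14$ it also satisfies it for a less conservative pair with $\rho\nu\le\tfrac14$.'' That claim is false. Take $\dot x=u$, $y=-x$. For any candidate pair one needs $\nabla V(x)^Tu+u^Tx+\rho\|x\|^2+\nu\|u\|^2\le0$ for all $x,u$; maximizing over $u$ (which forces $\nu<0$) gives $\|\nabla V(x)+x\|\le 2\sqrt{\rho\nu}\,\|x\|$, and integrating this along rays shows $V\ge0$ is achievable only when $\rho\nu\ge\tfrac14$ with $\rho,\nu<0$. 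So this system is IF-OFP precisely for index pairs \emph{outside} $\Pi$ and for none inside it, and no argument ``using the structure of $G$'' can push its certificate back into $\Pi$. The resolution is that the lemma is not the system-wise statement you are trying to prove; in \cite{matiakis2006novel} it is a normalization of the \emph{parameterization of conic sectors}: the set $\{(u,y):\omega(u,y)\ge0\}$ is a nondegenerate cone exactly when $\rho\nu<\tfrac14$, degenerates to a line of nonnegative gain when $\rho\nu=\tfrac14$ with $\rho\ge0$, and otherwise is either $\{0\}$ (no system can be so certified) or (the complement of) a degenerate set covering essentially all of the input--output space (the certificate carries no sector information, e.g.\ every system with $V\equiv0$ is IF-OFP$(-\tfrac12,-\tfrac12)$). ``Without loss of generality'' refers to discarding these degenerate supply rates from the parameterization, not to relocating every system's index pair into $\Pi$. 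Your proof should be restated in those terms; as written, its final case rests on a claim that a one-line counterexample refutes.
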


\section{Problem Statement}
\label{sec:prob}
As shown in figures \ref{fig:ETNCSDPQ}, we are exploring the interconnection of two IFOF passive systems. The main plant has passivity levels $\rho_p$, and $\nu_p$, and the controller has passivity levels $\rho_c$, and $\nu_c$. We consider a large class of systems where the indices can take positive or negative values indicating the extent that each sub-system is passive or non-passive. Most linear and nonlinear systems can be described by this definition given that we know their passivity indices. Additionally, we are considering the effects of signal quantization, time-delays and information loss which are unavoidable in networked control systems. Our set-up will guarantee finite-gain $L_2$-stability and performance improvement for the networked control system given in figure \ref{fig:ETNCSDPQ} in the presence of quantization, time-delays, information loss and event-triggering conditions on both sides. This is reached based on the main concepts of dissipativity and passivity (given in definitions \ref{diss}, \ref{pass}) and the relation given in definition \ref{IFOF}. At the same time, our proposed design will greatly reduce the communication load on the shared band-limited communication network between the plant and controller.

The set-up is event-triggered on both sides. The updating process is governed by the event-detectors. This structure is commonly used in literature to analyze the behavior of networked interconnections as it can capture different NCS configurations \cite{hespanha2007survey}. The inputs to the controller or plant are held constant based on the last value received from the communication network. $w_1(t)$ can denote a reference input on the plant side or an external disturbance in cases where we are examining the robustness of our design. $w_2(t)\neq 0$ as an  external disturbance on the controller side has been only considered for examining the robustness of the triggering condition in section \ref{sec:robustness}. 

Additionally, we consider simple triggering conditions for both sides:
\begin{align}
&||e_p(t)||_2^2 > \delta_p||y_p(t)||_2^2 ~~~~ 0 < \delta_p \leq 1\\
&||e_c(t)||_2^2 > \delta_c||y_c(t)||_2^2 ~~~~ 0 < \delta_c \leq 1
\end{align}

This will facilitate the design process and also make it easier for the designer to understand and analyze the trade-offs amongst performance, finite-gain $L_2$-stability, channel utilization, and passivity levels of sub-systems and to make design decisions accordingly. Another advantage of the above conditions is that we do not need the exact dynamical models for each sub-system before making design decisions, and that an access only to each sub-system's output is sufficient. Moreover, in \cite{rahnama2016qsr} we showed the relationship amongst each of these event-triggering conditions and $\emph{QSR}-dissipativiy$ and stability, and our work in this report will utilize some of these former results as building blocks for the new findings. 

As previously mentioned, in this report we show finite-gain $L_2$-stability and robustness for the networked control set-up under the triggering conditions mentioned above and with an appropriate design of the $M$-Transformation matrix in the presence of time-varying or constant delays, signal quantizations and information loss. As you can see in the figure \ref{fig:ETNCSDPQ}, by utilizing the $M$-Transformation matrix, we will seek to transform the controller $G_c$ and the communication network into a new IFOF passive system $\tilde{G}_c$ with input and output passivity indices $\tilde{\rho}_c$ and $\tilde{\nu}_c$. Our proposed design will show a range of allowable values for entries in the $M$-Transformation matrix that guarantee finite-gain $L_2$-stability based on the triggering conditions, and passivity levels of the plant and controller. Additionally, we show a relationship amongst these values, robustness and the number of allowed consecutive packet dropouts. Hence, the designer will be able to examine these relationships and make decisions accordingly based on application, robustness, performance, reliability of communication links and efficient utilization of the shared band-limited communication channel. We will show that finite-gain $L_2$-stability, robustness and communication rate in our design depend on the passivity levels for each sub-unit, design entries in the $M$-Transformation matrix, and the flexibility of our triggering conditions, and that more passive systems with larger passivity indices have better performances, are more robust, and utilize the communication network less frequently. 
		\begin{figure}[!t]
			\centering
			\includegraphics[scale = .65]{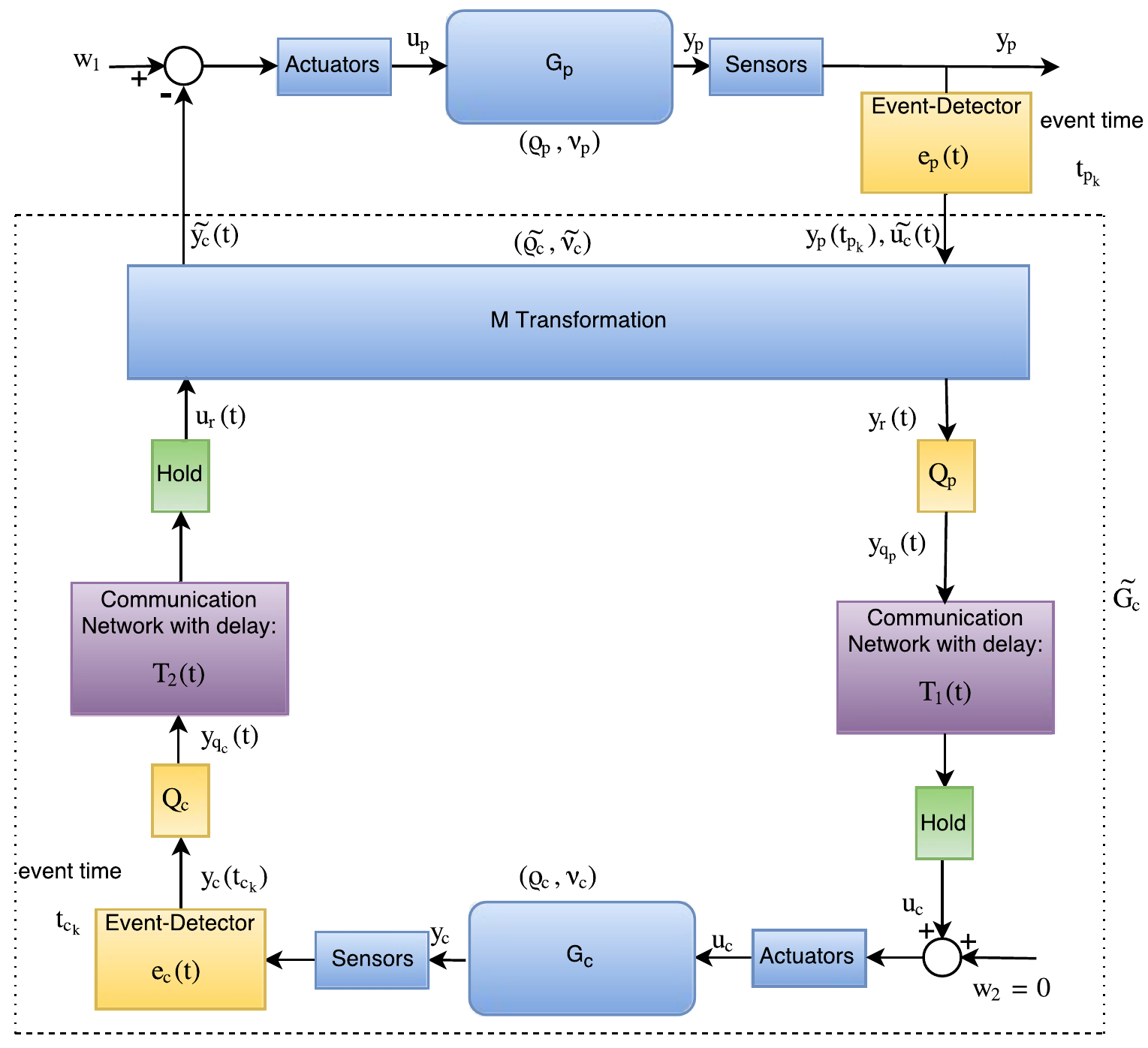}
			\caption{A Networked Control System Interconnection of two IF-OFP systems with Quantizations, Time-varying Delays and Event-detectors on both sides.}
			\label{fig:ETNCSDPQ}
		\end{figure}
\clearpage
\section{Event-Triggered Networked Control System's Structure}
	    \label{sec:structure}

\subsection{Event-Triggering Conditions}
The triggering mechanisms in the figure \ref{fig:ETNCSDPQ} are representing the situations, in which new information is sent every time a violation of the triggering condition occurs. We are considering the set-up where the IFOF passive plant $G_p$ has passivity indices $\rho_p$, $\nu_p$ and the IFOF passive controller has passivity indices $\rho_c$, $\nu_c$. An event-detector is located on the output of the plant to monitor the behavior of plant's output. An updated measure of $y_p$ is sent to the communication network when the error between the last information sent ($y_p(t_{p_{k}})$) and the current one:  $e_p(t)=y_p(t)-y_p(t_{p_{k}})$ (for $t \in [t_{p_k}, t_{p_{k+1}})$) exceeds a predetermined threshold established by the designer. A similar setup is also presented on the controller's output where the event-detector is located on the output of the controller, and an updated measure of $y_c$ is sent to the communication network when the error between the last information sent from the controller ($y_c(t_{c_{k}})$) and the current one  $e_c(t)=y_c(t)-y_c(t_{c_{k}})$ (for $t \in [t_{c_k}, t_{c_{k+1}})$) exceeds a predetermined threshold value.  The triggering conditions on the plant's and controller's sides are:
		  \begin{align}
		  \label{eq:trigp}
		  &||e_p(t)||_2^2>\delta_p||y_p(t)||_2^2~~ where~~ \delta_p  \in (0,1]
		  \end{align}
		  \begin{align}
		  \label{eq:trigc}
		  &||e_c(t)||_2^2>\delta_c||y_c(t)||_2^2~~ where~~ \delta_c  \in (0,1]
		  \end{align} 

where $e_p(t)=y_p(t)-y_p(t_{p_k})$ for all $t \in [t_{p_k},t_{p_{k+1}})$ and at instances that triggering condition is met, and new information is exchanged, the error is set back to zero: $e_p(t_{p_{k+1}})=0$. Similarly, on the controllers side, we have  $e_c(t)=y_c(t)-y_c(t_{c_k})$ for any $t \in [t_{c_k},t_{c_{k+1}})$ and $e_c(t_{c_{k+1}})=0$. This set-up results in a great decrease in the amount of information required to be exchanged between sub-systems in order for them to maintain the desirable performance. Additionally one conclusion made from triggering conditions in (\ref{eq:trigp}) and (\ref{eq:trigc}) is the following:
		 \begin{align}
		 \label{eq:trigpr}
		 &||y_p(t_{p_k})||_2 \leq (1+\sqrt{\delta_p})||y_p(t)||_2~~ for~ any ~~ t \in [t_{p_k},t_{p_{k+1}})
		 \end{align} 
		 \begin{align}
		 \label{eq:trigcr}
		 &||y_c(t_{c_k})||_2 \leq (1+\sqrt{\delta_c})||y_c(t)||_2~~ for~ any ~~ t \in [t_{c_k},t_{c_{k+1}})
		 \end{align} 
		
As mentioned earlier, in section \ref{sec:main} we show the process for determining $\delta_p$. $\delta_c$ is chosen based on our previous work on this topic given in theorem 10 in \cite{rahnama2016qsr}.

\subsection{Signal Quantizations}
$Q_p$ and $Q_c$ represent the signal quantization blocks in the framework presented in figure \ref{fig:ETNCSDPQ}. We have the following general definition for quantizers:
		\begin{align*}
		&q(v)=[q_1(v_1) q_2(v_2)...q_m(v_m)]^T, q_j(-v_j)=-q_j(v_j)
		\end{align*}
		
where $q(.), j=1,2,...,m$ is a static time-invariant symmetric quantizer. The quantization levels are expressed as the following:
		\begin{align*}
		&Q=\{\pm\sigma_j, j=\pm 1,\pm 2...\}\cup\{0\}
		\end{align*}

The quantization regions differ for various quantization approaches. We have considered a large class of quantizers called passive quantizers in our work. This class includes many different methods of quantization such as logarithmic quantizers, mid-riser and mid-tread uniform quantizers. A quantizer is passive if its input $u$ and output $y$ satisfy the following relation:
			\begin{align*}
			&au^T(t)u(t)\leq u^T(t)y(t) \leq bu^T(t)u(t)
			\end{align*}

where $y(t)=q(u(t))$ and $0\leq a \leq b<\infty$.  Moreover, if $u(t)$ is a vector then the passive quantizer functions component wise on the input vectors. Given figure \ref{fig:ETNCSDPQ}, we have:\\
			\begin{align}
			&Q_p:a_p y_r^T(t)y_r(t)\leq  y_r^T(t)y_{q_p}(t) \leq b_p y_r^T(t)y_r(t) \\
			&Q_c:a_c y_c^T(t_{c_k})y_c(t_{c_k})  \leq y_c^T(t_{c_k})y_{q_c}(t)\leq b_c y_c^T(t_{c_k})y_c(t_{c_k})
			\end{align}

where $b_c>a_c\geq0$ and $b_p>a_p\geq0$ and $y_{q_c}(t)=Q_c(y_c(t_{c_k}))$ and $y_{q_p}(t)=Q_p(y_r(t))$ denote the relation between input and output of each quantizer. Additionally, from the relationship between the input and output of a passive quantizer, we can conclude that:
            \begin{align}
         	\label{eq:trigpqr}
         	&||y_{q_p}(t)||_2^2 \leq b_p^2 ||y_r(t)||^2_2.
            \end{align} 
			\begin{align}
		    \label{eq:trigcqr}
			 &||y_{q_c}(t)||_2^2\leq b_c^2 ||y_c(t_{c_k})||_2^2
			 \end{align} 
			 
\subsection{Time-varying Network Induced Delays}
It is known that time-delays in networked control systems degrade the performance of interconnections. Most approaches in the literature can be divided into two categories: delay-dependent and delay-independent methods \cite{richard2003time,gu2003stability}. In our work, we propose a time-independent input-output stability approach, where we do not assume that the length of time-delays are known.  $T_1(t)$ represents the time-varying network induced delay from the plant to controller. $T_2(t)$ represents the time-varying network induced delay from the controller to plant. We assume that there is no bound on admissible network induced delays, namely, they may be larger than inter-event time intervals. However, we have considered a causality assumption and a bounded rate of change for time-delays meaning:
		\begin{align}
		\label{eq:timed1}	
	     &0\leq |\frac{dT_1(t)}{dt}| \leq d_1<1
		\end{align} 
		\begin{align}
		\label{eq:timed2}	
		 &0\leq |\frac{dT_2(t)}{dt}| \leq d_2<1.
		\end{align} 
	
These conditions simply mean that the time-delays cannot grow faster than time itself. The hold blocks function as zero-order holders, meaning at each triggering instance new information is sent from the plant or controller to the hold over the communication network, and in between these instances, the hold blocks will have constant values  that are set based on the last information they have received.
\subsection{$M$-Transformation Matrix}
The idea of implementing a matrix transformation was adopted from the research done in the field of communication theory where similar methods are used to overcome the negative effects of time-delays in passive interconnections \cite{sontag1989smooth,lozano2002passivation}, and similarly in control theory to preserve passivity \cite{hirche2009distributed} or to passivate non-passive systems \cite{xia2014passivity,Arash2016,xia2015guaranteeing}. However, authors in these works do not consider quantization and time-delays together, they assume constant time-delays or in the case where they are dealing with networked control systems, they consider continuous or periodic information exchange between sub-systems. The $M$-Transformation matrix on the plant's side of communication links (figure \ref{fig:Mmatrix}) can be looked at as a simple local controller consisting of only simple gains such that:
		\begin{align}
		\label{eq:M}
		&\begin{bmatrix}
		y_r(t)\\u_r(t)
		\end{bmatrix}=M \begin{bmatrix}
		\tilde{u}_c(t)\\\tilde{y}_c(t)
		\end{bmatrix}=\begin{bmatrix}
		m_{11}I_m& 0I_m\\
		m_{21}I_m & m_{22}I_m\end{bmatrix} \begin{bmatrix}
		\tilde{u}_c(t)\\\tilde{y}_c(t)
		\end{bmatrix}
		\end{align}

where $I_m \in R^{m\times m}$. The presence of $M$-Transformation matrix gives the networked control interconnection a distributed control architecture where the complicated control actions and computations are being processed at the remote controllers. Our intention is to design the $M$-Transformation matrix such that the negative effects of quantizations on both sides, and time delays on both sides are overcome, and as a result the entire system is both finite-gain $L_2$-stable and robust. 
	
		\begin{figure}[!t]
				\centering
				\includegraphics[scale = 1]{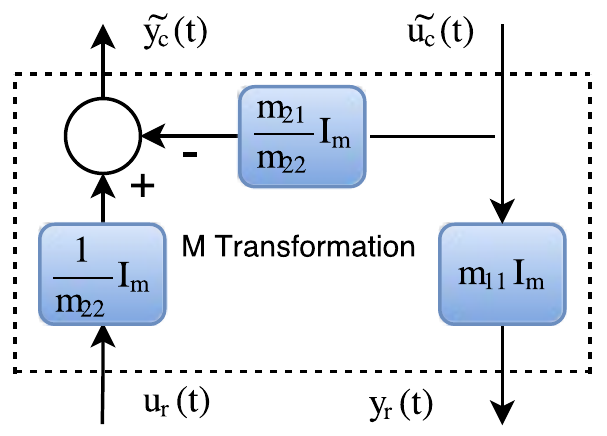}
				\caption{Proposed Transformation for overcoming the  effects of time-varying delays and signal quantization ($M$-Transformation Matrix).}
				\label{fig:Mmatrix}
		\end{figure}
\clearpage		
\section{Main Results}
\label{sec:main}	
\subsection{Finite-Gain $L_2$-Stability Analysis of Triggering Conditions}
\begin{theorem}
Consider the networked interconnection of two IFOF passive systems $G_p$ and $\tilde{G}_c$ in figure \ref{fig:ETNCSDPQ} with respective passivity indices of $\nu_p$, $\rho_p$, $\tilde{\nu}_c$ and $\tilde{\rho}_c$. Let the event instance $t_{p_k}$ be explicitly determined by the triggering condition $||e_p(t)||_2^2>\delta_p||y_p(t)||_2^2$ where $\delta_p  \in(0,1]$, and  $\beta(\tilde{\nu}_c)>\frac{1}{4\gamma}$ and $\tilde{\rho}_c+\nu_p-|\nu_p|-\frac{1}{2\alpha}>0$, where $\alpha,\gamma>0$ and:\\
			
			\[\beta(\tilde{\nu}_c) =
			\begin{cases}
			\rho_p-\frac{\delta_p\alpha}{2}      & \quad \text{if } \tilde{\nu}_c \geq 0\\
			\rho_p+2\tilde{\nu}_c-\delta_p(\frac{\alpha}{2}-2\tilde{\nu}_c)  & \quad \text{if } \tilde{\nu}_c < 0\\
			\end{cases}
			\]
Then the event-triggered networked control system is finite-gain $L_2$-stable from the input $
			w_1(t)$ to output
			$y_p(t)$. 		
\end{theorem}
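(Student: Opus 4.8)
The plan is to run a passivity-theorem argument from the dissipation inequalities of the two IF-OFP subsystems together with the consequence of the plant-side triggering rule, and to absorb the residual indefinite terms with Young's inequality so that the hypotheses $\beta(\tilde\nu_c)>\frac{1}{4\gamma}$ and $\tilde\rho_c+\nu_p-|\nu_p|-\frac{1}{2\alpha}>0$ become exactly the sign conditions that are needed. Let $V_p\ge 0$ and $V_c\ge 0$ be storage functions certifying that $G_p$ is IF-OFP$(\nu_p,\rho_p)$ and $\tilde G_c$ is IF-OFP$(\tilde\nu_c,\tilde\rho_c)$, so that along trajectories $\dot V_p\le u_p^Ty_p-\rho_p\|y_p\|_2^2-\nu_p\|u_p\|_2^2$ and $\dot V_c\le \tilde u_c^T\tilde y_c-\tilde\rho_c\|\tilde y_c\|_2^2-\tilde\nu_c\|\tilde u_c\|_2^2$. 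I would take $V=V_p+V_c$ as the candidate storage function for the whole interconnection and bound $\dot V$.

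First I would substitute the interconnection relations of Figure~\ref{fig:ETNCSDPQ}: the plant is driven by $u_p=w_1-\tilde y_c$ and the transformed controller by the held plant output $\tilde u_c=y_p(t_{p_k})=y_p(t)-e_p(t)$ for $t\in[t_{p_k},t_{p_{k+1}})$. The key structural fact is the usual passivity cancellation, $u_p^Ty_p+\tilde u_c^T\tilde y_c=(w_1-\tilde y_c)^Ty_p+(y_p-e_p)^T\tilde y_c=w_1^Ty_p-e_p^T\tilde y_c$, so the bilinear coupling $\pm\,y_p^T\tilde y_c$ drops out and
\[ \dot V\;\le\; w_1^Ty_p-e_p^T\tilde y_c-\rho_p\|y_p\|_2^2-\nu_p\|u_p\|_2^2-\tilde\rho_c\|\tilde y_c\|_2^2-\tilde\nu_c\|\tilde u_c\|_2^2. \]

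Next I would estimate the indefinite terms. The triggering mechanism guarantees $\|e_p(t)\|_2^2\le\delta_p\|y_p(t)\|_2^2$ for all $t$ (the inequality cannot have been violated since the last event, and $e_p$ is reset to zero at each event). Apply $w_1^Ty_p\le\gamma\|w_1\|_2^2+\frac{1}{4\gamma}\|y_p\|_2^2$ and $-e_p^T\tilde y_c\le\frac{\alpha}{2}\|e_p\|_2^2+\frac{1}{2\alpha}\|\tilde y_c\|_2^2\le\frac{\alpha\delta_p}{2}\|y_p\|_2^2+\frac{1}{2\alpha}\|\tilde y_c\|_2^2$. The term $-\nu_p\|u_p\|_2^2$ is nonpositive and dropped when $\nu_p\ge 0$, and when $\nu_p<0$ it is bounded by $|\nu_p|\,\|w_1-\tilde y_c\|_2^2\le 2|\nu_p|\,\|w_1\|_2^2+2|\nu_p|\,\|\tilde y_c\|_2^2$; either way it adds $(|\nu_p|-\nu_p)\|\tilde y_c\|_2^2$. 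Likewise $-\tilde\nu_c\|\tilde u_c\|_2^2$ is dropped when $\tilde\nu_c\ge 0$ and, when $\tilde\nu_c<0$, is bounded by $|\tilde\nu_c|\,\|y_p-e_p\|_2^2\le 2|\tilde\nu_c|(1+\delta_p)\|y_p\|_2^2$. Collecting coefficients, the bound on $\dot V$ then carries $\big(-\beta(\tilde\nu_c)+\frac{1}{4\gamma}\big)\|y_p\|_2^2$ — with $\beta$ exactly the two-branch function in the statement, the second branch being produced precisely by the $\tilde\nu_c<0$ estimate — plus $\big(-\tilde\rho_c+\frac{1}{2\alpha}+|\nu_p|-\nu_p\big)\|\tilde y_c\|_2^2$ and a nonnegative multiple of $\|w_1\|_2^2$. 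The hypotheses make both bracketed coefficients strictly negative, so $\dot V\le c_1\|w_1\|_2^2-c_2\|y_p\|_2^2$ with $c_1>0$, $c_2>0$; discarding the $\|\tilde y_c\|_2^2$ term, integrating from $0$ to $\tau$, and using $V\ge 0$ gives $\|y_{p,\tau}\|_{L_2}\le\sqrt{c_1/c_2}\,\|w_{1,\tau}\|_{L_2}+\sqrt{V(x(0))/c_2}$, that is, finite-gain $L_2$-stability from $w_1$ to $y_p$.

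The main obstacle, and the reason the statement is asymmetric, is the simultaneous treatment of a possible passivity shortage in both subsystems: when $\nu_p<0$ and/or $\tilde\nu_c<0$ the quadratics $-\nu_p\|u_p\|_2^2$ and $-\tilde\nu_c\|\tilde u_c\|_2^2$ are destabilizing and must be re-expressed through $u_p=w_1-\tilde y_c$ and $\tilde u_c=y_p-e_p$ and re-absorbed — this is what forces the $|\nu_p|$ correction and the two cases of $\beta(\tilde\nu_c)$ — and keeping the $\delta_p$-bookkeeping consistent across these cases is the delicate point. The remaining technicalities — that $V$ does not jump at the event times $t_{p_k}$ (only the held signal $y_p(t_{p_k})$ is redefined, while the subsystem states remain continuous) and that the differential dissipation inequalities may be integrated across events — are routine.
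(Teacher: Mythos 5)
Your proposal is correct and follows essentially the same route as the paper: the summed storage function $V=V_p+\tilde V_c$, the substitution $u_p=w_1-\tilde y_c$, $\tilde u_c=y_p-e_p$, the Young-type splittings with parameters $\alpha,\gamma$, the trigger bound $\|e_p\|_2^2\le\delta_p\|y_p\|_2^2$, and the sign-of-$\tilde\nu_c$ case split reproducing the two branches of $\beta$. Your shortcut of bounding $-\nu_p\|u_p\|_2^2$ and $-\tilde\nu_c\|\tilde u_c\|_2^2$ directly via $\|a+b\|_2^2\le 2\|a\|_2^2+2\|b\|_2^2$ instead of expanding the cross terms yields exactly the same constants, and your final gain $\sqrt{c_1/c_2}$ is in fact the correctly square-rooted version of the paper's expression.
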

\begin{proof} 
\label{thm:NCSYpEp}
Given the systems $G_p$ with passivity indices $\nu_p$, $\rho_p$, and $\tilde{G}_c$ which is representing the transformed system encompassing the controller, network connections and the $M$-Transformation matrix as given in figure \ref{fig:ETNCSDPQ} with passivity indices  $\tilde{\nu}_c$ and $\tilde{\rho}_c$, there exist $V_p(t)$ and $\tilde{V}_c(t)$ such that:
	
			\begin{align*}
			& \dot{V}_p(t) \leq u^T_p(t)y_p(t)-\nu_pu^T_p(t)u_p(t)-\rho_p y^T_p(t)y_p(t) 
			\\&\dot{\tilde{V}}_c(t) \leq \tilde{u}^T_c(t)\tilde{y}_c(t)-\tilde{\nu}_c \tilde{u}^T_c(t)\tilde{u}_c(t)-\tilde{\rho}_c \tilde{y}^T_c(t)\tilde{y}_c(t).\\
			\end{align*}

Additionally, according to the setup portrayed in figure \ref{fig:ETNCSDPQ}, the following relationships stand for any $t \in [t_{p_k},t_{p_{k+1}})$:
			
			\begin{align*}
			& u_p(t)=w_1(t)-\tilde{y}_c(t)
			\\&e_p(t)=y_p(t)-y_p(t_{p_k})
			\\&\tilde{u}_c(t)=y_p(t_{p_k})=y_p(t)-e_p(t).
			\end{align*}

We design the triggering condition based on the following rule ($||e_p(t)||_2^2>\delta_p||y_p(t)||_2^2 $):	
			\begin{align*}
			& \langle e_p,e_p\rangle>\delta_p \langle y_p,y_p\rangle, ~ 0 < \delta_p \leq 1.
			\end{align*} 
			
We consider the following storage function for the interconnection:
			
			\begin{align*}
			& V(t) = V_p(t) + \tilde{V}_c(t),
			\end{align*} 
			
as a results, we have:
			\begin{align*}
			&\dot{V}(t)=\dot{V}_p(t)+\dot{\tilde{V}}_c(t)\\&~~~~~~ \leq u^T_p(t)y_p(t)-\nu_pu^T_p(t)u_p(t)-\rho_p y^T_p(t)y_p(t)+\tilde{u}^T_c(t)\tilde{y}_c(t)-\tilde{\nu}_c \tilde{u}^T_c(t)\tilde{u}_c(t)-\tilde{\rho}_c \tilde{y}^T_c(t)\tilde{y}_c(t).
			\end{align*} 
			
We know that $u_p(t)=w_1(t)-\tilde{y}_c(t)$, $\tilde{u}_c(t)=y_p(t)-e_p(t)$, consequently for any $t \in [t_{p_k},t_{p_{k+1}})$ we have:
			
			\begin{align*}
			&\dot{V}(t) \leq (w_1(t)-\tilde{y}_c(t))^T y_p(t)-\nu_p(w_1(t)-\tilde{y}_c(t))^T(w_1(t)-\tilde{y}_c(t))-\rho_p y^T_p(t)y_p(t)
			\\&~~~~~~+(y_p(t)-e_p(t))^T\tilde{y}_c(t)-\tilde{\nu}_c (y_p(t)-e_p(t))^T(y_p(t)-e_p(t))-\tilde{\rho}_c \tilde{y}^T_c(t)\tilde{y}_c(t)
			\\&~~~~~~
			=w_1^T(t)y_p(t)-\nu_pw_1^T(t)w_1(t)-(\rho_p+\tilde{\nu}_c)y^T_p(t)y_p(t)-(\tilde{\rho}_c+\nu_p)\tilde{y}^T_c(t)\tilde{y}_c(t)
			\\&~~~~~~+2\nu_pw^T_1(t)\tilde{y}_c(t)+2\tilde{\nu}_cy^T_p(t)e_p(t)-e_p^T(t)\tilde{y}_c(t)-\tilde{\nu}_ce_p^T(t)e_p(t).
			\end{align*} 
			
			 We can show: $-e_p^T(t)\tilde{y}_c(t)=-(\frac{\sqrt{\alpha}e_p(t)}{\sqrt{2}}+\frac{\tilde{y}_c(t)}{\sqrt{2\alpha}})^2+\frac{\alpha}{2} e_p^T(t)e_p(t)+\frac{1}{2\alpha}\tilde{y}_c^T(t)\tilde{y}_c(t)$ where $\alpha>0$, and $w_1^T(t)y_p(t)=-(\sqrt{\gamma} w_1(t)-\frac{y_p(t)}{2\sqrt{\gamma}})^2+\gamma w^T_1(t)w_1(t)+\frac{y^T_p(t)y_p(t)}{4\gamma}$ where $\gamma>0$. Additionally, we have $e_p^T(t)e_p(t) \leq \delta_p y_p^T(t)y_p(t)$ for all $t \in [t_{p_k},t_{p_{k+1}})$ and $2\nu_pw^T_1(t)\tilde{y}_c(t)\leq |\nu_p|w^T_1(t)w_1(t)+|\nu_p|\tilde{y}_c^T(t)\tilde{y}_c(t)$.
			  Hence, simplifying further, we can have: 
			
			$\dot{V}(t) \leq \gamma w^T_1(t)w_1(t)+\frac{y^T_p(t)y_p(t)}{4\gamma}-\nu_p w^T_1(t)w_1(t)  
			+|\nu_p|w^T_1(t)w_1(t)+|\nu_p|\tilde{y}_c^T(t)\tilde{y}_c(t)\\~~~~~~~~~~~~-(\rho_p+\tilde{\nu}_c-\frac{\delta_p\alpha}{2}) y^T_p(t)y_p(t)-(\tilde{\rho}_c+\nu_p-\frac{1}{2\alpha})\tilde{y}_c^T(t)\tilde{y}_c(t) 
			\\~~~~~~~~~~~~+2\tilde{\nu}_cy^T_p(t)e_p(t)-\tilde{\nu}_ce_p^T(t)e_p(t)$,\\
			
if $\tilde{\nu_c} \geq 0$, we can utilize the following relationship $2\tilde{\nu}_cy^T_p(t)e_p(t) \leq |\tilde{\nu}_c|y_p^T(t)y_p(t)+|\tilde{\nu}_c|e_p^T(t)e_p(t)$ back into the above equation to have:\\ 
			
			$\dot{V}(t) \leq \gamma w^T_1(t)w_1(t)+\frac{y^T_p(t)y_p(t)}{4\gamma}-\nu_p w^T_1(t)w_1(t)  
			+|\nu_p|w^T_1(t)w_1(t)\\~~~~~~~~~~~~-(\rho_p-\frac{\delta_p\alpha}{2}) y^T_p(t)y_p(t)-(\tilde{\rho}_c+\nu_p-|\nu_p|-\frac{1}{2\alpha})\tilde{y}_c^T(t)\tilde{y}_c(t)$,\\
			
if $\tilde{\nu}_c < 0$, given that $2\tilde{\nu}_cy^T_p(t)e_p(t) \leq |\tilde{\nu}_c|y_p^T(t)y_p(t)+|\tilde{\nu}_c|e_p^T(t)e_p(t)$, and $|\tilde{\nu}_c|e_p^T(t)e_p(t) \leq \delta_p|\tilde{\nu}_c| y_p^T(t)y_p(t)$ and $-\tilde{\nu}_c e_p^T(t)e_p(t) \leq \delta_p |\tilde{\nu}_c| y_p^T(t)y_p(t)$ we have: \\
			
			$\dot{V}(t) \leq \gamma w^T_1(t)w_1(t)+\frac{y^T_p(t)y_p(t)}{4\gamma}-\nu_p w^T_1(t)w_1(t)  
			+|\nu_p|w^T_1(t)w_1(t)\\~~~~~~~~~~~~-(\rho_p+2\tilde{\nu}_c-\delta_p(\frac{\alpha}{2}-2\tilde{\nu}_c)) y^T_p(t)y_p(t)-(\tilde{\rho}_c+\nu_p-|\nu_p|-\frac{1}{2\alpha})\tilde{y}_c^T(t)\tilde{y}_c(t)$.\\
			
Given $\tilde{\rho}_c+\nu_p-|\nu_p|-\frac{1}{2\alpha}>0$, we have:\\
			
			$\dot{V}(t) \leq (\gamma+|\nu_p|-\nu_p) w^T_1(t)w_1(t)  - (\beta(\tilde{\nu}_c)-\frac{1}{4\gamma})y^T_p(t)y_p(t)$,
		
where:
			
			\[\beta(\tilde{\nu}_c) =
			\begin{cases}
			\rho_p-\frac{\delta_p\alpha}{2}      & \quad \text{if } \tilde{\nu}_c \geq 0\\
			\rho_p+2\tilde{\nu}_c-\delta_p(\frac{\alpha}{2}-2\tilde{\nu}_c)  & \quad \text{if } \tilde{\nu}_c < 0\\
			\end{cases}
			\]
			
We know $\beta(\tilde{\nu}_c)>\frac{1}{4\gamma}$, and we integrate from both sides over the time interval $[0,\tau]$:
			
				\begin{align}
					&(\beta(\tilde{\nu}_c)-\frac{1}{4\gamma})\int_{0}^{\tau}y^T_p(t)y_p(t)dt+\int_{0}^{\tau}\dot{V}(t)dt\leq (\gamma+|\nu_p|-\nu_p)\int_{0}^{\tau}w^T_1(t)w_1(t)dt
				\end{align}
			
Given $V(x)\geq0$, we take the square root and simplify to have:
			\begin{align}
			&||y_{p_\tau}||_{L2} \leq (\frac{\gamma+|\nu_p|-\nu_p}{\beta(\tilde{\nu}_c)-\frac{1}{4\gamma}})||w_{1_\tau}||_{L2} + \sqrt{\frac{V(0)}{\beta(\tilde{\nu}_c)-\frac{1}{4\gamma}}}
			\end{align}
			
which leads to the definition of finite-gain $L_2$-stability and proves the theorem. It is important to note that $y_{p_\tau}$ and $w_{1_\tau}$ represent the integral terms including signals $y_p(t)$ and $w_1(t)$ over the time interval  $[0,\tau]$. 
\end{proof}

\begin{remark}
The proof reveals an interesting relationship between passivity indices and finite-gain $L_2$-stability: larger values of $\beta(\tilde{\nu}_c)$ lead to tighter upper-bounds on the output. Relating this to passivity indices in $\beta(\tilde{\nu}_c)$, one finds out that a larger output passivity index $\rho_p$ leads to a smaller $L_2$ gain or a tighter upper-bound for the output. Similarly, there is a direct relation between plant's input passivity index and the interconnection's $L_2$ gain: a positive $\nu_p$ leads to a tighter upper-bound on the output compared to a negative input passivity index.
\end{remark}

\subsection{The Design for the $M$-Transformation Matrix}
\begin{theorem}
Consider the event-triggered networked control system shown in figure \ref{fig:ETNCSDPQ}. Let the quantizers be passive quantizers, $\rho_c>0$, and network induced time-varying delays from the plant to controller $(T_1(t))$ and the controller to plant $(T_2(t))$ meet the causality conditions:\\
			\begin{align*}
			&0\leq |\frac{dT_1(t)}{dt}| \leq d_1<1 \\& 0\leq |\frac{dT_2(t)}{dt}| \leq d_2<1
			\end{align*}
			let the triggering instances on the plant's, and controller's sides be explicitly determined by the following relations:
			\begin{align*}
			&||e_p(t)||_2^2>\delta_p||y_p(t)||_2^2~~ where~~ \delta_p  \in (0,1]\\
			&||e_c(t)||_2^2>\delta_c||y_c(t)||_2^2~~ where~~ \delta_c  \in (0,1]
			\end{align*} 
			if the entries of the $M$-Transformation matrix are selected such that:   
			\begin{align*}
			&m_{22}^2 > \frac{(\frac{1}{2\alpha}+|\nu_p|-\nu_p)(2b_c^2(1+\sqrt{\delta_c})^2(1+d_2))}{\rho_c}~~~where~~\tilde{\rho}_c=\frac{\rho_cm_{22}^2}{2b_c^2(1+\sqrt{\delta_c})^2(1+d_2)}\\&
			|m_{21}|=\frac{b_c^2(1+\sqrt{\delta_c})^2(1+d_2)}{\rho_c|m_{22}|} ~~~ where~~~  m_{21}m_{22}<0\\
			&\tilde{\nu}_c=\frac{b_c^2(1+\sqrt{\delta_c})^2(1+d_2)}{2\rho_c m_{22}^2}-(\frac{1}{2\rho_c}+|\nu_c|)b_p^2(1+d_1)m_{11}^2~~~~where~~ \tilde{\rho}_c\tilde{\nu}_c<\frac{1}{4}
			\end{align*}
		    to meet the conditions given in theorem 2, namely
			$\beta(\tilde{\nu}_c)>\frac{1}{4\gamma}$ and $\tilde{\rho}_c+\nu_p-|\nu_p|-\frac{1}{2\alpha}>0$, where $\alpha,\gamma>0$ and:\\
			\[\beta(\tilde{\nu}_c) =
			\begin{cases}
			\rho_p-\frac{\delta_p\alpha}{2}      & \quad \text{if } \tilde{\nu}_c \geq 0\\
			\rho_p+2\tilde{\nu}_c-\delta_p(\frac{\alpha}{2}-2\tilde{\nu}_c)  & \quad \text{if } \tilde{\nu}_c < 0\\
			\end{cases}
			\]
			then the networked control system is finite-gain $L_2$-stable from the input $w_1(t)$ to output $y_p(t)$. 
		\end{theorem}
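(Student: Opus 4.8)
The plan is to deduce Theorem~3 from Theorem~2. The only thing that really has to be established is that, with the indicated choice of gains $m_{11},m_{21},m_{22}$, the composite block $\tilde G_c$ — the controller $G_c$ together with the two passive quantizers, the two delay channels $T_1,T_2$, and the $M$-gains — is IF-OFP from $\tilde u_c$ to $\tilde y_c$ with \emph{exactly} the indices $\tilde\nu_c,\tilde\rho_c$ written in the statement, and that this pair satisfies the hypotheses $\beta(\tilde\nu_c)>\tfrac{1}{4\gamma}$ and $\tilde\rho_c+\nu_p-|\nu_p|-\tfrac{1}{2\alpha}>0$ of Theorem~2. Once that is in place, Theorem~2 applies verbatim to the interconnection of $G_p$ and $\tilde G_c$ and returns finite-gain $L_2$-stability from $w_1$ to $y_p$.

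Step~1 (a storage function for $\tilde G_c$). I would start from the controller's dissipation inequality $\dot V_c\le u_c^Ty_c-\nu_c\|u_c\|^2-\rho_c\|y_c\|^2$ and append one Krasovskii-type functional per delay channel, $W_i(t)=c_i\int_{t-T_i(t)}^{t}\|\phi_i(s)\|^2\,ds$, with $\phi_1$ the signal entering the $T_1$-channel (the quantizer output $y_{q_p}$) and $\phi_2$ the signal entering the $T_2$-channel ($y_{q_c}$), the weights $c_1,c_2>0$ left free for now. Because $|\dot T_i|\le d_i<1$, differentiating $W_i$ trades the energy of the \emph{delayed} signal (which is what actually drives the controller, resp.\ is returned to the plant) against the energy of the \emph{current} signal, and this is exactly the mechanism that produces the $(1+d_i)$-factors appearing in $\tilde\nu_c$ and $\tilde\rho_c$. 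The certificate is $\tilde V_c=V_c+W_1+W_2$; it is continuous and $C^1$ off the event times, so one argues on each inter-event interval, just as in the proof of Theorem~2.

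Step~2 (chaining the estimates — the crux). Differentiate $\tilde V_c$, bound $-\nu_c\|u_c\|^2\le|\nu_c|\|u_c\|^2$ so that $\nu_c$ of either sign is admitted, and split the controller cross term $u_c^Ty_c$ by Young's inequality with weight $\rho_c$; this is where $\rho_c>0$ is indispensable — it leaves $-\tfrac{\rho_c}{2}\|y_c\|^2$ and generates the $\tfrac{1}{2\rho_c}$-coefficients visible in $\tilde\nu_c$. Then substitute the passive-quantizer sector bounds $\|y_{q_p}\|^2\le b_p^2\|y_r\|^2$, $\|y_{q_c}\|^2\le b_c^2\|y_c(t_{c_k})\|^2$, the controller-side triggering estimate $\|y_c(t_{c_k})\|\le(1+\sqrt{\delta_c})\|y_c(t)\|$ of~(\ref{eq:trigcr}), and the $M$-gains $y_r=m_{11}\tilde u_c$, $u_r=m_{21}\tilde u_c+m_{22}\tilde y_c$. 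Choosing $c_1$ so that the negative Krasovskii term on the $T_1$-channel exactly absorbs the $\|u_c\|^2$-energy coming from the controller, and $c_2$ together with the prescribed $|m_{21}|$ (and the sign condition $m_{21}m_{22}<0$) so that the surviving cross term has coefficient exactly one, one collapses the derivative to $\dot{\tilde V}_c\le\tilde u_c^T\tilde y_c-\tilde\nu_c\|\tilde u_c\|^2-\tilde\rho_c\|\tilde y_c\|^2$ with $\tilde\rho_c,\tilde\nu_c$ precisely the stated closed forms; one also reads off that $\tilde\rho_c\tilde\nu_c=\tfrac{1}{4}-\tfrac{\rho_c m_{22}^2}{2b_c^2(1+\sqrt{\delta_c})^2(1+d_2)}\big(\tfrac{1}{2\rho_c}+|\nu_c|\big)b_p^2(1+d_1)m_{11}^2<\tfrac{1}{4}$, so $(\tilde\nu_c,\tilde\rho_c)$ is an admissible IF-OFP pair per the domain lemma. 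The genuine difficulty here is bookkeeping: three distinct mechanisms (two time-varying delays with their rate bounds, two quantizers, and the controller-side event rule) enter a single inequality, and the weights $c_1,c_2$ and the $M$-gains must be arranged so that the coefficients reduce to these exact expressions rather than to some unspecified admissible pair.

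Step~3 (checking Theorem~2 and concluding). Substituting $\tilde\rho_c=\tfrac{\rho_c m_{22}^2}{2b_c^2(1+\sqrt{\delta_c})^2(1+d_2)}$ into the lower bound imposed on $m_{22}^2$ yields exactly $\tilde\rho_c>\tfrac{1}{2\alpha}+|\nu_p|-\nu_p$, i.e.\ $\tilde\rho_c+\nu_p-|\nu_p|-\tfrac{1}{2\alpha}>0$, the second hypothesis of Theorem~2. The first hypothesis, $\beta(\tilde\nu_c)>\tfrac{1}{4\gamma}$, is then met by taking $m_{11}$ small enough that $\beta(\tilde\nu_c)>0$ in whichever branch of the definition of $\beta$ the sign of $\tilde\nu_c$ selects, and any $\gamma>1/(4\beta(\tilde\nu_c))$ (with $\alpha$ the free trade-off parameter common to both conditions). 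With both hypotheses in force, Theorem~2 delivers finite-gain $L_2$-stability from $w_1(t)$ to $y_p(t)$, which is the assertion of Theorem~3.
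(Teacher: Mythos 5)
Your overall architecture is the same as the paper's: build a storage certificate for the composite block $\tilde G_c$, read off $\tilde\rho_c$ and $\tilde\nu_c$ from the resulting supply rate, force the cross-term coefficient to one via $m_{21}m_{22}<0$ and the stated $|m_{21}|$, and then hand the pair $(\tilde\nu_c,\tilde\rho_c)$ to Theorem~2. Your Steps 2 and 3 (the Young-inequality split of $u_c^Ty_c$ using $\rho_c>0$, the sector bounds for the quantizers, the triggering estimate $\|y_c(t_{c_k})\|_2\le(1+\sqrt{\delta_c})\|y_c(t)\|_2$, the substitution of the $M$-gains, and the verification that $m_{22}^2$ large enough is equivalent to $\tilde\rho_c>\frac{1}{2\alpha}+|\nu_p|-\nu_p$) match the paper's computation.

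The genuine gap is in Step 1, the treatment of the delays. The paper does \emph{not} use Krasovskii functionals: it exploits the fact that the signals entering the delay channels are piecewise constant (ZOH outputs), so that $\int_{t_0}^{t}\|u_c(\tau)\|_2^2\,d\tau$ is a sum of terms of the form $\bigl[(t_{p_{k+1}}+T_1(t_{p_{k+1}}))-(t_{p_k}+T_1(t_{p_k}))\bigr]\|y_{q_p}(t_{p_k})\|_2^2$, and the rate bound $|\dot T_1|\le d_1$ stretches each interval by at most the factor $(1+d_1)$. That interval-stretching argument is what produces the $(1+d_i)$ constants appearing in the theorem. Your functional $W_i(t)=c_i\int_{t-T_i(t)}^{t}\|\phi_i(s)\|^2ds$ instead gives
\begin{align*}
\dot W_i(t)=c_i\Bigl[\|\phi_i(t)\|^2-(1-\dot T_i(t))\,\|\phi_i(t-T_i(t))\|^2\Bigr],
\end{align*}
so bounding the delayed energy by the current energy costs a factor $\frac{1}{1-d_i}$, not $(1+d_i)$. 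Since $\frac{1}{1-d_i}>1+d_i$ for $d_i\in(0,1)$, carrying your route through yields an IF-OFP certificate with a strictly smaller $\tilde\rho_c$ (and a different $\tilde\nu_c$) than the closed forms in the statement; the resulting sufficient condition on $m_{22}^2$ is strictly more restrictive than the one the theorem asserts. Hence for parameter choices that satisfy the stated $(1+d_2)$-condition but not the $\frac{1}{1-d_2}$-condition, your argument proves nothing, i.e.\ it establishes only a weaker variant of the theorem. To close the gap, replace the Krasovskii step with the direct integral accounting over the ZOH intervals (or otherwise justify the $(1+d_i)$ factors from the piecewise-constancy of the transmitted signals). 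A minor secondary point: this accounting is inherently integral, so you obtain the dissipation inequality for $\tilde G_c$ in integrated form $\Delta \tilde V_c\le\int(\cdots)$ rather than the pointwise $\dot{\tilde V}_c\le\cdots$ you claim; that is harmless, since Definition~2 and Theorem~2 only need the integral form.
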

		\begin{proof}
			Given $\rho_c > 0$, for the IFOF passive controller we have:
			\begin{align}
			&\dot{V}_c(t) \leq u^T_c(t)y_c(t)-\nu_c u^T_c(t)u_c(t)-\rho_c y^T_c(t)y_c(t)\\
			&~~~~~~\leq |\nu_c|u^T_c(t)u_c(t)-\rho_c y^T_c(t)y_c(t) +u^T_c(t)y_c(t)\\
			&~~~~~~\leq-\frac{1}{2\rho_c}(u_c(t)-\rho_cy_c(t))^2+\frac{1}{2\rho_c}u^T_c(t)u_c(t)-\frac{\rho_c}{2}y^T_c(t)y_c(t)+|\nu_c|u^T_c(t)u_c(t)\\
			&~~~~~~\leq(\frac{1}{2\rho_c}+|\nu_c|)u^T_c(t)u_c(t)-\frac{\rho_c}{2} y^T_c(t)y_c(t)\\
			&~~~~~~\leq(\frac{1}{2\rho_c}+|\nu_c|)||u_c(t)||^2_2-\frac{\rho_c}{2}||y_c(t)||^2_2.
			\end{align}	
			
			Integrating from both sides over the time interval $t_0$ to $t$ ($\forall t \geq t_0 \geq 0$) we have:
			\begin{align*}
			&\Delta V_c=V_c(t)-V_c(t_0)\leq (\frac{1}{2\rho_c}+|\nu_c|) \int_{t_0}^{t}||u_c(\tau)||^2_2 d\tau -\frac{\rho_c}{2}\int_{t_0}^{t}||y_c(\tau)||^2_2 d\tau.
			\end{align*} 		
			
			Additionally, we have: 	
			\begin{align}
			&\int_{t_0}^{t}||u_c(\tau)||^2_2 d\tau=\sum_{k=0}^{N}[t_{p_{k+1}}-t_{p_{k}}+T_1(t_{p_{k+1}})-T_1(t_{p_{k}})]y_{q_p}(t_{p_k})~~~~~~~      (ZOH)\\
			&~~~~~~~~~~~~~~~~~~~~\leq\sum_{k=0}^{N}[(1+d_1)(t_{p_{k+1}}-t_{p_{k}})]y_{q_p}(t_{p_k}))~~~~~~~~~~~~~~~~~      (Given~~ \ref{eq:timed1})\\
			&~~~~~~~~~~~~~~~~~~~~\leq b_p^2(1+d_1)\int_{t_0}^{t}||y_r(\tau)||^2_2 d\tau~~~~~~~~~~~~~~~~~~~~~~~~~~~~      (Given~~ \ref{eq:trigpqr})
			\end{align}
			
			Similarly, we have: 
			\begin{align}
			&\int_{t_0}^{t}||u_r(\tau)||^2_2 d\tau=\sum_{k=0}^{N}[t_{c_{k+1}}-t_{c_{k}}+T_2(t_{c_{k+1}})-T_2(t_{c_{k}})]y_{q_c}(t_{c_k})~~~~~~~  ~    (ZOH)\\
			&~~~~~~~~~~~~~~~~~~~~\leq\sum_{k=0}^{N}[(1+d_2)(t_{c_{k+1}}-t_{c_{k}})]y_{q_c}(t_{c_k})~~~~~~~~~~~~~~~~~~      (Given~~ \ref{eq:timed2})\\
			&~~~~~~~~~~~~~~~~~~~~\leq (1+d_2)\int_{t_0}^{t}||y_{q_c}(\tau)||^2_2 d\tau 
			\end{align}	
			
			Given (\ref{eq:trigc}), and (\ref{eq:trigcqr}), we have: 
			\begin{align}
			&||y_{q_c}(t_{c_k})||^2_2 \leq  b_c^2||y_c(t_{c_k})||^2_2 \\
			&||y_{q_c}(t_{c_k})||^2_2 \leq  b_c^2(1+\sqrt{\delta_c})^2||y_c(t)||^2_2. 
			\end{align}
			
			Based on (35) and (37), we have:	
			\begin{align}
			&\int_{t_0}^{t}||u_r(\tau)||^2_2 d\tau\leq b_c^2(1+\sqrt{\delta_c})^2(1+d_2)\int_{t_0}^{t}||y_c(\tau)||^2_2 d\tau
			\end{align}
			
			Given (29), (32), and (38) we have: 
			\begin{align}
			&\dot{V}_c(t) \leq (\frac{1}{2\rho_c}+|\nu_c|)||u_c(t)||^2_2-\frac{\rho_c}{2}||y_c(t)||^2_2\\&~\Delta V_c\leq (\frac{1}{2\rho_c}+|\nu_c|)b_p^2(1+d_1)\int_{t_0}^{t}||y_r(\tau)||^2_2 d\tau-\frac{\rho_c}{2b_c^2(1+\sqrt{\delta_c})^2(1+d_2)}\int_{t_0}^{t}||u_r(\tau)||^2_2 d\tau
			\end{align}   
			
			Additionally based on the relation given in (\ref{eq:M}), we have:
			\begin{align*}
			&y_r(t)=m_{11}\tilde{u}_c(t)\\
			&u_r(t)=m_{21}\tilde{u}_c(t)+m_{22}\tilde{y}_c(t).
			\end{align*}
			
			Leading to:
			\begin{align}
			&~ \Delta V_c\leq (\frac{1}{2\rho_c}+|\nu_c|)b_p^2(1+d_1)\int_{t_0}^{t}||y_r(\tau)||^2_2 d\tau-\frac{\rho_c}{2b_c^2(1+\sqrt{\delta_c})^2(1+d_2)}\int_{t_0}^{t}||u_r(\tau)||^2_2 d\tau
			\\&~~~~~~~\leq-\frac{\rho_cm_{21}m_{22}}{b_c^2(1+\sqrt{\delta_c})^2(1+d_2)}\int_{t_0}^{t}\tilde{y}_c^T(\tau)\tilde{u}_c(\tau) d\tau\\&~~~~~~~+[(\frac{1}{2\rho_c}+|\nu_c|)b_p^2(1+d_1)m_{11}^2-\frac{\rho_cm_{21}^2}{2b_c^2(1+\sqrt{\delta_c})^2(1+d_2)}]\int_{t_0}^{t}\tilde{u}_c^T(\tau)\tilde{u}_c(\tau) d\tau \\&~~~~~~~-\frac{\rho_cm_{22}^2}{2b_c^2(1+\sqrt{\delta_c})^2(1+d_2)}\int_{t_0}^{t}\tilde{y}_c^T(\tau)\tilde{y}_c(\tau) d\tau
			\end{align}   
			
			From the previous sub-section 5.1, we know that in order for the networked control system to be finite-gain $L_2$-stable, we need $\tilde{\rho}_c>\frac{1}{2\alpha}+|\nu_p|-\nu_p$. From (44) we find:
			\begin{align}
			&m_{22}^2 > \frac{(\frac{1}{2\alpha}+|\nu_p|-\nu_p)(2b_c^2(1+\sqrt{\delta_c})^2(1+d_2))}{\rho_c}
			\end{align}
			
			Similarly,  we need the following to hold:
			\begin{align}
			&-\frac{\rho_cm_{21}m_{22}}{b_c^2(1+\sqrt{\delta_c})^2(1+d_2)}=1.
			\end{align}
			
			As a result, we find:
			\begin{align}
			&|m_{21}|=\frac{b_c^2(1+\sqrt{\delta_c})^2(1+d_2)}{\rho_c|m_{22}|},
			\end{align}
			
			where
			\begin{align*}
			&m_{21}m_{22}<0.
			\end{align*}
			
			 Now we can calculate $\tilde{\nu}_c$:
			\begin{align}
			&\frac{b_c^2(1+\sqrt{\delta_c})^2(1+d_2)}{2\rho_c m_{22}^2}-(\frac{1}{2\rho_c}+|\nu_c|)b_p^2(1+d_1)m_{11}^2=\tilde{\nu}_c~~~~where~~ \tilde{\rho}_c\tilde{\nu}_c<\frac{1}{4},
			\end{align}
			
			which completes the proof - given that we want $\beta(\tilde{\nu}_c)>\frac{1}{4\gamma}$, for a positive $\tilde{\nu}_c$, we will need to choose $m_{11}$ so that $\tilde{\nu}_c>0$.
		\end{proof}
		
		\begin{remark}
			As a design process, one can first select $m_{22}$ to meet the required $L_2$-stability condition $\tilde{\rho}_c>\frac{1}{2\alpha}+|\nu_p|-\nu_p$ and desired performance and robustness index. Consequently, $m_{21}$ is easily calculated from $m_{22}$. And as the last step, we can calculate $m_{11}$ from (48) for the desired $\tilde{\nu}_c$.  
		\end{remark}
		\begin{remark}
			In (48), one can see that $m_{22}$ and $m_{11}$ have a reciprocal relation, and that increasing $m_{22}$ will decrease $m_{11}$ for a positive $\tilde{\nu}_c$. This is expected due to the relationship between input passivity index and output passivity index in IFOF passive systems $(\tilde{\rho}_c\tilde{\nu}_c<\frac{1}{4})$.  
		\end{remark}
			\begin{remark}
				Our work considers the delays from the plant to controller, and from the controller to plant. We require no upper-bound for the delays. However, we have assumed that the time-delays are continuously differentiable. The conditions:
				\begin{align*}
				&0\leq |\frac{dT_1(t)}{dt}| \leq d_1<1 \\& 0\leq |\frac{dT_2(t)}{dt}| \leq d_2<1
				\end{align*} 
				simply mean that the time-delays cannot grow faster than time itself, hence they are statements about casualty of the networked control interconnection.  
			\end{remark}
		\begin{remark}
			Another way of looking at this theorem is that our specific selection of $M$ matrix will result in an IFOF passive system $\tilde{G}_c$ with passivity indices $\tilde{\rho}_c$ and $\tilde{\nu}_c$, which is in a feedback interconnection with $G_p$. And as long as the conditions in theorem 2 are met, the entire system is finite-gain $L_2$-stable. 
		\end{remark}
		\begin{remark}
			Theorem 3 and the triggering conditions given in theorem 2 leave a lot of room for the designer to select parameters such that the desired communication rate, robustness and performance criteria are met. The selection of $\delta_p$ and $\delta_c$ has a greater effect on the rate of communication needed by the networked control system to maintain stability. The selection of the entries of the transformation matrix, by determining the passivity indices $\tilde{\rho}_c$ and $\tilde{\nu}_c$, has a greater effect on the performance and robustness of the set-up. 
		\end{remark}
		\begin{remark}
			Our results show that the selection of the triggering condition on the plant's side depends on the output passivity index of the plant. Values of $\rho_p$ and $\delta_p$ together will determine the stability criteria of the the networked control interconnection. On the other hand, $\delta_c$ can be chosen relatively large in compared to $\delta_p$ to decrease the communication rate from the controller to plant, however larger values of $\delta_c$ will require larger gain blocks in the local controller on the plant's side; this can be interpreted as cost increase. This means that the selection of $\delta_c$ has a direct effect on the communication rate and cost.
		\end{remark}
		\begin{remark}
			 Our design needs a local controller at the plant side leading to a decentralized control platform. But as illustrated in figure \ref{fig:Mmatrix}, the local controller only requires a direct output feedback loop from $\tilde{u}_c$ to $\tilde{y}_c$ with a constant gain. Our proposed set-up like other existing networked control frameworks will still require a remote implementation for more complex control and computation tasks such as optimization (optimal control), adaptive control and so on. 
		\end{remark}
	
		\section{Robustness of the Proposed Networked Control Design}
		\label{sec:robustness}
		In this section, we seek to examine the robustness of our design in two categories: robustness against external disturbance, and robustness against packet dropouts and information loss: 
		
		In sections 6.1.1 and 6.1.2, we show lower-bounds for the time intervals between triggering instances for the event-triggering condition on the plant's side, and the event-triggering condition on the controller's side. Based on our findings in these two sub-sections, we can characterize a relationship between our proposed networked control design's performance and undesirable external disturbances. We can define this relationship as a measure of our design's robustness against uncertainties.   
		
		In sections 6.2.1 and 6.2.2, we show a trade-off between passivity levels of sub-systems in networked control interconnection, selected triggering conditions, the entries in the proposed $M$-Transformation matrix and the maximum number of allowable consecutive packet losses in communication links from the plant to controller, and from the controller to plant. The problem of information loss and packet dropouts arises  when unreliable communication channels such as wireless networks or general-purpose channels are implemented in the networked control framework. In these sections, we seek to characterize a relationship between design parameters and network interconnection's robustness against packet dropouts. Consequently, one can select design parameters according to the desired communication rate and reliability of the network connections. 
		\subsection{Analysis of Inter-Event Time Intervals (Zeno-Behavior) for Event-Triggering Conditions}
		\subsubsection{Inter-Event Time-Interval Analysis of the Event-Triggering condition on the Plant's Output}
		\begin{proposition}	
			Consider the event-triggered networked control system given in figure \ref{fig:ETNCSDPQ} with time-delays $T_1(t)$ and $T_2(t)$, and quantization blocks $Q_p$ and $Q_c$. Let the plant and controller be Input Feed-forward Output Feedback passive with respective passivity indices $\rho_p$, $\nu_p$, and $\rho_c$, $\nu_c$. Consider also that the transformation matrix $M$ is implemented according to theorem 3.	Let the triggering instances on the plant's, and controller's sides be explicitly determined by the following relations:
			\begin{align*}
			&||e_p(t)||_2^2>\delta_p||y_p(t)||_2^2~~ where~~ \delta_p  \in (0,1]\\
			&||e_c(t)||_2^2>\delta_c||y_c(t)||_2^2~~ where~~ \delta_c  \in (0,1]
			\end{align*} 
			Assuming that the input $w_1$ meets the following conditions:\\
			1) $||\dot{w}_1(t)||_2\leq C_0$ for $t \in [t_{p_k},t_{p_{k+1}})$\\
			2) $sup_{t \in [t_{p_k},t_{p_{k+1}})}||w_1(t)||_2\leq C_1$.\\
			Then for any initial condition  $x_p(0)$ for the plant, the inter-event time interval $\{t_{p_{k+1}} - t_{p_k}\}$ is lower bounded by:
			\begin{align}
			&t_{p_{k+1}}^- - t_{p_k} \geq \frac{\sqrt{\delta_p}||y_p(t_{p_{k+1}}^-))||_2}{\frac{C_0}{\rho_p}+\Gamma(2\theta_{k,p})(\frac{1}{\rho_p^2}+1)(C_1+C_2)}
			\end{align}
			where 	
			\begin{align}
			&\Gamma(2\theta_{k,p})=||cos^{-1}(\frac{\nu_p+\rho_p}{\sqrt{(1-4\rho_p\nu_p)+(\nu_p+\rho_p)^2}})||_2
			\end{align}
		\end{proposition}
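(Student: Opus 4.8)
The plan is to bound the growth rate of the triggering error $e_p$ over a single inter-event interval and then invert that bound. On $[t_{p_k},t_{p_{k+1}})$ the held sample $y_p(t_{p_k})$ is constant, so $\dot e_p(t)=\dot y_p(t)$; combined with $e_p(t_{p_k})=0$ this gives $\|e_p(t_{p_{k+1}}^-)\|_2\le\int_{t_{p_k}}^{t_{p_{k+1}}^-}\|\dot y_p(\tau)\|_2\,d\tau$. At the triggering instant the condition holds with equality, $\|e_p(t_{p_{k+1}}^-)\|_2=\sqrt{\delta_p}\,\|y_p(t_{p_{k+1}}^-)\|_2$. Hence, if $\Phi$ is any uniform upper bound for $\|\dot y_p(\tau)\|_2$ on the interval, then $\sqrt{\delta_p}\,\|y_p(t_{p_{k+1}}^-)\|_2\le \Phi\,(t_{p_{k+1}}^- - t_{p_k})$, i.e. $t_{p_{k+1}}^- - t_{p_k}\ge \sqrt{\delta_p}\,\|y_p(t_{p_{k+1}}^-)\|_2/\Phi$. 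The whole proof therefore reduces to showing that $\Phi=\frac{C_0}{\rho_p}+\Gamma(2\theta_{k,p})\big(\frac{1}{\rho_p^2}+1\big)(C_1+C_2)$ is such a bound.

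To obtain this bound on $\|\dot y_p\|_2$ without a model of the plant, I would invoke the conic (sector) characterization of the IF-OFP plant. With $\rho_p>0$, the dissipation inequality $u_p^Ty_p-\rho_p\|y_p\|^2-\nu_p\|u_p\|^2\ge 0$ is equivalent to $\|y_p-\frac{1}{2\rho_p}u_p\|_2\le \frac{\sqrt{1-4\rho_p\nu_p}}{2\rho_p}\|u_p\|_2$, so the pair $(u_p,y_p)$ is confined to the fixed cone whose opening angle is exactly $\Gamma(2\theta_{k,p})=\cos^{-1}\!\big(\frac{\rho_p+\nu_p}{\sqrt{(1-4\rho_p\nu_p)+(\rho_p+\nu_p)^2}}\big)$; this identification is just a trigonometric computation of the angle between the cone's boundary lines $y_p=\frac{1\pm\sqrt{1-4\rho_p\nu_p}}{2\rho_p}u_p$. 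I would then decompose $\dot y_p$ into a component aligned with the cone's center direction, which tracks $\frac{1}{2\rho_p}\dot u_p$, and a component that slides the pair \emph{within} the sector, whose size is controlled by the angular width $\Gamma(2\theta_{k,p})$ times the signal magnitudes (the $\frac{1}{\rho_p^2}+1$ arising from combining the $u_p$- and $y_p$-coordinates of the pair, using $|y_p|\lesssim\rho_p^{-1}\|u_p\|$). Because the controller output reaches the plant through a zero-order hold and through the delay $T_2$ whose causality bound $|\dot T_2|<1$ preserves piecewise constancy, $\tilde y_c$ is piecewise constant between its update instants, so $\dot u_p=\dot w_1$ almost everywhere and $\|\dot u_p\|_2\le C_0$, producing the $\frac{C_0}{\rho_p}$ term. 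For the second component one needs a uniform bound $C_1+C_2$ on $\|u_p\|_2=\|w_1-\tilde y_c\|_2$: $\|w_1\|_2\le C_1$ by hypothesis, while $C_2$ bounds $\|\tilde y_c\|_2$, obtained by chaining the triggering bound $\eqref{eq:trigcr}$, the passive-quantizer bound $\eqref{eq:trigcqr}$, the delay causality bounds, and the boundedness of $y_c$ that follows from the finite-gain $L_2$-stability of Theorems 2--3.

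I expect the main obstacle to be making the "angular" part of the $\|\dot y_p\|$ estimate rigorous: membership of the pair in a fixed cone does not by itself limit how fast it moves inside that cone, so one must exploit that the cone is effectively centered on $\frac{1}{2\rho_p}u_p(t)$ and that $u_p$ varies slowly (piecewise-constant feedback plus slowly-varying reference), converting "slow motion of the admissible region" together with "confinement to a narrow sector" into a genuine rate bound — and extracting the clean constants $\frac{1}{\rho_p^2}+1$ and the clean appearance of $\Gamma(2\theta_{k,p})$ from this is the delicate bookkeeping. A secondary subtlety is that the uniform bounds on $\|\dot y_p\|$ and $\|\tilde y_c\|$ require an implicit well-posedness/bounded-trajectory assumption beyond the $L_2$ estimates, and the jump instants of the held and delayed signals must be treated as a measure-zero set inside $\int\|\dot y_p\|\,d\tau$.
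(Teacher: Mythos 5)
Your proposal follows essentially the same route as the paper: the paper also writes $u_p=r_p\cos\theta_p$, $y_p=r_p\sin\theta_p$ inside the IF-OFP cone, splits $\dot y_p$ into radial and angular parts, bounds $\tan\theta_p\leq 1/\rho_p$, uses the piecewise-constancy of the held/delayed feedback to get $\dot u_p=\dot w_1$ and $\|u_p\|_2\leq C_1+C_2$, and then integrates the resulting rate bound against the triggering threshold. The ``main obstacle'' you flag is real but is not resolved in the paper either: the paper simply asserts $\|\dot\theta_p(t)\|_2\leq\|2\theta_{k,p}\|_2=\Gamma(2\theta_{k,p})$, i.e.\ it bounds the angular \emph{rate} by the apex \emph{angle} without further justification, so your sketch is at the same level of rigor as the published argument.
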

		\begin{proof}
			Our proof will be based on the conic properties of IFOF passive systems \cite{kottenstette2009digital,mccourt2009connection,bridgemanextended}. It is beneficial to point out that  the input and output of system $G_p$ in figure \ref{fig:ETNCSDPQ} with  passivity indices of $\nu_p$, $\rho_p$ occupy a conic sector in the input-output space \cite{zames1966input,zames1966input2,hirche2009distributed} which is defined by its center line $\theta_z$ and its apex angle $2\theta_{k,p}$. More importantly, at any instance $t$, the input and output lie within the conic sector:
			\begin{align}
			&\theta_p(t) \in [\theta_z-\theta_{k,p},\theta_z+\theta_{k,p}]
			\end{align}
			
			Additionally, we can define the relationship between the input and output with $\theta_p(t)$ by parameterizing them in polar coordinates:
			
			\begin{align}
			&u_p(t)=r_p(t)cos(\theta_p(t))\\
			&y_p(t)=r_p(t)sin(\theta_p(t)).
			\end{align}
			
			Furthermore, we have:
			\begin{align}
			&cot(2\theta_z)=\nu_p-\rho_p,~~ \theta_z \in [0, \frac{\pi}{2}]\\
			&cos(2\theta_{k,p})=\frac{\nu_p+\rho_p}{\sqrt{(1-4\rho_p\nu_p)+(\nu_p+\rho_p)^2}},~~\theta_{k,p} \in [0,\frac{\pi}{2}).
			\end{align}
			
			Since $e_p(t)=y_p(t)-y_p(t_{p_k})$ for all $t \in [t_{p_k},t_{p_{k+1}})$, we have the following for any $t \in [t_{p_k},t_{p_{k+1}})$:
			\begin{align}
			&\frac{d}{dt}||e_p(t)||_2\leq||\dot{e}_p(t)||_2=||\dot{y}_p(t)||_2=||\dot{r}_p(t)sin(\theta_p(t))||_2+||r_p(t)\dot{\theta}_p(t)cos(\theta_p(t))||_2.
			\end{align}
			
			Given (51), we have $||\dot{\theta}_p(t)||_2\leq||2\theta_{k,p}||_2$, where:
			\begin{align}
			&||2\theta_{k,p}||_2=||cos^{-1}(\frac{\nu_p+\rho_p}{\sqrt{(1-4\rho_p\nu_p)+(\nu_p+\rho_p)^2}})||_2=\Gamma(2\theta_{k,p})
			\end{align}
			
			Hence we have:
			\begin{align}
			&\frac{d}{dt}||e_p(t)||_2\leq||\dot{e}_p(t)||_2=||\dot{y}_p(t)||_2=||\dot{r}_p(t)sin(\theta_p(t))||_2+||r_p(t)\dot{\theta}_p(t)cos(\theta_p(t))||_2\\
			&~~~~~~~~~~~~~~\leq||\dot{r}_p(t)sin(\theta_p(t))||_2+\Gamma(2\theta_{k,p})||u_p(t)||_2.
			\end{align}
			
			Furthermore, we have:
			\begin{align}
			&||\dot{r}_p(t)||_2=||\frac{d}{dt}(\frac{u_p(t)}{cos(\theta_p(t))})||_2=||\frac{\dot{u}_p(t)cos(\theta_p(t))+u_p(t)\dot{\theta}_p(t)sin(\theta_p(t))}{cos^2(\theta_p(t))}||_2
			\end{align}	
			
			Inserting the above in (59), we have:
			\begin{align}
			&\frac{d}{dt}||e_p(t)||_2\leq||\dot{u}_p(t)tan(\theta_p(t))+u_p(t)\Gamma(2\theta_{k,p})tan^2(\theta_p(t))||_2+\Gamma(2\theta_{k,p})||u_p(t)||_2
			\end{align}
			
			Since we need to have $\rho_p>\frac{\delta_p\alpha}{2}+\frac{1}{4\gamma}$ or $\rho_p>\delta_p(\frac{\alpha}{2}-2\tilde{\nu}_c)-2\tilde{\nu}_c+\frac{1}{4\gamma}$ with $\gamma>0$ for the interconnection to be $L_2$-stable, we have $\frac{y_p(t)}{u_p(t)}=tan(\theta_p(t))\leq\frac{1}{\rho_p}$ \cite{khalil2002nonlinear}. Given that $\tilde{y}_c(t)=\frac{1}{m_{22}}y_{q_c}(t)-\frac{m_{21}}{m_{22}}y_p(t_{p_k})$ takes constant piece-wise values over the time-interval $[t_{p_k},t_{p_{k+1}})$: $||\dot{\tilde{y}}_c(t)||_2=0$ and we can denote $sup_{t \in [t_{p_k},t_{p_{k+1}})}||\tilde{y}_c(t)||_2\leq C_2$. As a results we have $||u_p(t)||_2=||w_1(t)-\tilde{y}_c(t)||_2\leq ||w_1(t)||_2+||\tilde{y}_c(t)||_2 \leq C_1+C_2$, and  we have $||\dot{u}_p(t)||_2=||\dot{w}_1(t)||_2\leq C_0$. Hence:
			\begin{align}
			&\frac{d}{dt}||e_p(t)||_2\leq||\dot{u}_p(t)||_2||tan(\theta_p(t))||_2+\Gamma(2\theta_{k,p})||u_p(t)||_2||tan^2(\theta_p(t))||_2+\Gamma(2\theta_{k,p})||u_p(t)||_2\\
			&~~~~~~~~~~~~~~\leq \frac{C_0}{\rho_p}+\Gamma(2\theta_{k,p})(\frac{1}{\rho_p^2}+1)(C_1+C_2)
			\end{align}
			
			We can find a positive lower-bound for inter-event time intervals between each two triggered instances by integrating from both sides over the time interval $[t_{p_k},t_{p_{k+1}})$ and with the initial condition $e_p(t_{p_k})=0$ given that after each triggering instance the error is reset to zero. We will have:
			
			\begin{align}
			&t_{p_{k+1}}^- - t_{p_k} \geq \frac{||e_p(t_{p_{k+1}}^-)||_2}{\frac{C_0}{\rho_p}+\Gamma(2\theta_{k,p})(\frac{1}{\rho_p^2}+1)(C_1+C_2)}
			\end{align}
			
			The next triggering-condition is met at $t_{p_{k+1}}$ meaning $||e_p(t_{p_{k+1}})||_2^2>\delta_p||y_p(t_{p_{k+1}})||_2^2$, this gives us $||e_p(t_{p_{k+1}}^-)||_2=\sqrt{\delta_p}||y_p(t_{p_{k+1}}^-)||_2$:
			\begin{align}
			&t_{p_{k+1}}^- - t_{p_k} \geq \frac{\sqrt{\delta_p}||y_p(t_{p_{k+1}}^-))||_2}{\frac{C_0}{\rho_p}+\Gamma(2\theta_{k,p})(\frac{1}{\rho_p^2}+1)(C_1+C_2)}
			\end{align}
		
			where 	
			\begin{align}
			&\Gamma(2\theta_{k,p})=||cos^{-1}(\frac{\nu_p+\rho_p}{\sqrt{(1-4\rho_p\nu_p)+(\nu_p+\rho_p)^2}})||_2
			\end{align}
			
			which proves the proposition. 
		\end{proof}
		\begin{remark}
			As it can be seen in (65), the only time $t_{p_{k+1}}^- - t_{p_k}=0$, is when $||y_p(t_{p_{k+1}}^-))||_2=0$. This can be interpreted as the case that the system is at rest or has converged to zero. In these cases, the triggering condition given in (\ref{eq:trigp}) is never met, and no new information is required to be sent. In other cases, such as when the output is tracking a specific signal, we have: $||y_p(t_{p_{k+1}}^-))||_2>0$ and $t_{p_{k+1}}^- - t_{p_k}>0$. However, the designed triggering condition combined with the plant output's rate of change affect the length of inter-event time intervals. 
		\end{remark}
		
		\begin{remark}
			On a similar note, one can see that a larger triggering threshold $\delta_p$ leads to larger inter-event intervals. Same direct relationship holds for output passivity indices, namely, a larger $\rho_p$ also increases the length of inter-event intervals. Moreover, the relationship between passivity indices and inter-event time intervals is explicitly shown in (65). 
		\end{remark}
		
		\begin{remark}
			Once the system is getting closer to convergence, the inter-event time intervals will get shorter in length: the system will need more control actions to guarantee convergence. However, in many cases this process will be fast and short: a faster information exchange means more precise and efficient control actions that will result in a faster convergence.
		\end{remark}
		
		\begin{remark}
			Another important aspect of the formula given in (65) is that $||\dot{w}_1(t)||_2<C_0$ and $sup_{t \in [t_{p_k},t_{p_{k+1}})}||w_1(t)||_2\leq C_1$ have reciprocal relationships with inter-event time intervals, meaning in cases with $w_1$ as the external disturbance to the system, the stronger the disturbance $w_1$ is, the shorter inter-event time intervals are and the triggering condition is met more frequently to compensate for the strong external disturbance. In other words, the relation given in (65) characterized the robustness of the event-triggering condition on the plant's side against external disturbance.
		\end{remark}	
		
		\begin{remark}
			It is also important to note that the results given in proposition 1 are independent of network time-delays, and signal quantizations. However, it is also important to mention that the results in proposition 1 are obtained based on the assumption that the finite-gain stability for the networked control system in figure \ref{fig:ETNCSDPQ} was achieved based on theorem 2 and theorem 3. 
		\end{remark}	

		\subsubsection{Inter-Event Time-Interval Analysis of the Event-Triggering condition on the Controller's Output}

		\begin{proposition}	
				Consider the event-triggered networked control system given in figure \ref{fig:ETNCSDPQ} with time-delays $T_1(t)$ and $T_2(t)$, and quantization blocks $Q_p$ and $Q_c$. Let the plant and controller be Input Feed-forward Output Feedback passive with respective passivity indices $\rho_p$, $\nu_p$, and $\rho_c$, $\nu_c$. Consider also that the transformation matrix $M$ is implemented according to theorem 3.	Let the triggering instances on the plant's, and controller's sides be explicitly determined by the following relations:
				\begin{align*}
				&	||e_p(t)||_2^2>\delta_p||y_p(t)||_2^2~~ where~~ \delta_p  \in (0,1]\\
				&	||e_c(t)||_2^2>\delta_c||y_c(t)||_2^2~~ where~~ \delta_c  \in (0,1]
				\end{align*} 
			For the purpose of analyzing the proposed networked control design's robustness against external disturbance, we assume that an external disturbance $w_2 \neq 0$ which meets the following conditions:\\
			1) $||\dot{w}_2(t)||_2\leq C^\prime_0$ for $t \in [t_{c_k},t_{c_{k+1}})$\\
			2) $sup_{t \in [t_{c_k},t_{c_{k+1}})}||w_2(t)||_2\leq C^\prime_1$\\
			is fed into the networked interconnection on the controller's side. Then for any initial condition  $x_p(0)$ for the plant, the inter-event time interval $\{t_{c_{k+1}} - t_{c_k}\}$ is lower bounded by:
			\begin{align}
			&t_{c_{k+1}}^- - t_{c_k} \geq \frac{\sqrt{\delta_c}||y_c(t_{c_{k+1}}^-))||_2}{\frac{C^\prime_0}{\rho_c}+\Gamma^\prime(2\theta_{k,c})(\frac{1}{\rho_c^2}+1)(C^\prime_1+C^\prime_2)}
			\end{align}
			where 	
			\begin{align}
			&\Gamma^\prime(2\theta_{k,c})=||cos^{-1}(\frac{\nu_c+\rho_c}{\sqrt{(1-4\rho_c\nu_c)+(\nu_c+\rho_c)^2}})||_2
			\end{align}
		\end{proposition}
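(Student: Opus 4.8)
The plan is to replicate the argument of Proposition 1 verbatim with the roles of plant and controller interchanged, exploiting the symmetry of the event-triggered interconnection in figure \ref{fig:ETNCSDPQ}. Since, under the hypotheses of Theorem 3, $G_c$ is IF-OFP with indices $\rho_c,\nu_c$ and $\rho_c>0$, its input $u_c(t)$ and output $y_c(t)$ occupy a conic sector in the input--output space with centre line $\theta_z$ satisfying $\cot(2\theta_z)=\nu_c-\rho_c$ and apex half-angle $\theta_{k,c}$ satisfying $\cos(2\theta_{k,c})=\frac{\nu_c+\rho_c}{\sqrt{(1-4\rho_c\nu_c)+(\nu_c+\rho_c)^2}}$, so that $\theta_c(t)\in[\theta_z-\theta_{k,c},\theta_z+\theta_{k,c}]$ for every $t$. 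First I would parameterise $u_c(t)=r_c(t)\cos(\theta_c(t))$ and $y_c(t)=r_c(t)\sin(\theta_c(t))$ in polar coordinates exactly as on the plant side, and record that the angular variation is controlled by the aperture of the cone, $\|\dot\theta_c(t)\|_2\le\|2\theta_{k,c}\|_2=\Gamma'(2\theta_{k,c})$.

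Next, using $e_c(t)=y_c(t)-y_c(t_{c_k})$ for $t\in[t_{c_k},t_{c_{k+1}})$, I would bound $\frac{d}{dt}\|e_c(t)\|_2\le\|\dot y_c(t)\|_2=\|\dot r_c(t)\sin\theta_c(t)\|_2+\|r_c(t)\dot\theta_c(t)\cos\theta_c(t)\|_2$, then substitute $\dot r_c=\frac{d}{dt}(u_c/\cos\theta_c)=\frac{\dot u_c\cos\theta_c+u_c\dot\theta_c\sin\theta_c}{\cos^2\theta_c}$ to reach $\frac{d}{dt}\|e_c(t)\|_2\le\|\dot u_c(t)\tan\theta_c(t)+u_c(t)\Gamma'(2\theta_{k,c})\tan^2\theta_c(t)\|_2+\Gamma'(2\theta_{k,c})\|u_c(t)\|_2$, which is the controller-side analogue of the inequality obtained for the plant. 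Since the interconnection is finite-gain $L_2$-stable by Theorems 2 and 3 and $\rho_c>0$, output strict passivity of $G_c$ gives $\tan\theta_c(t)=\|y_c(t)\|_2/\|u_c(t)\|_2\le 1/\rho_c$, which I would insert into the preceding bound.

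The remaining ingredient is a bound on $u_c$ and $\dot u_c$ over a controller inter-event interval. The input to $G_c$ equals the external disturbance $w_2(t)$ plus the zero-order-held, quantised and delayed plant measurement supplied through the network; since this network-fed component is piecewise constant on $[t_{c_k},t_{c_{k+1}})$ (exactly as $\tilde y_c$ was treated in the proof of Proposition 1), its derivative vanishes there, so $\|\dot u_c(t)\|_2=\|\dot w_2(t)\|_2\le C'_0$, while $\|u_c(t)\|_2\le\|w_2(t)\|_2+C'_2\le C'_1+C'_2$, with $C'_2$ a supremum bound on the held network signal over the interval. This yields $\frac{d}{dt}\|e_c(t)\|_2\le\frac{C'_0}{\rho_c}+\Gamma'(2\theta_{k,c})\big(\frac{1}{\rho_c^2}+1\big)(C'_1+C'_2)$. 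Integrating from $t_{c_k}$, where $e_c(t_{c_k})=0$, to $t_{c_{k+1}}^-$ gives $t_{c_{k+1}}^--t_{c_k}\ge\|e_c(t_{c_{k+1}}^-)\|_2\big/\big[\frac{C'_0}{\rho_c}+\Gamma'(2\theta_{k,c})(\frac{1}{\rho_c^2}+1)(C'_1+C'_2)\big]$, and substituting the triggering equality $\|e_c(t_{c_{k+1}}^-)\|_2=\sqrt{\delta_c}\,\|y_c(t_{c_{k+1}}^-)\|_2$ produces the claimed bound with $\Gamma'(2\theta_{k,c})=\|\cos^{-1}(\frac{\nu_c+\rho_c}{\sqrt{(1-4\rho_c\nu_c)+(\nu_c+\rho_c)^2}})\|_2$.

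The main obstacle is the step replacing $\|\dot u_c\|_2$ by $\|\dot w_2\|_2$: because plant-triggering instants interleave with controller-triggering instants and the two links carry independent delays $T_1,T_2$, one must argue carefully that the network-supplied part of $u_c$ is genuinely held constant on each $[t_{c_k},t_{c_{k+1}})$ (so only $w_2$ contributes to $\dot u_c$) and that a finite $C'_2$ bounding that held signal exists; this is precisely where the zero-order-hold structure and the finite-gain $L_2$-stability guaranteed by Theorems 2 and 3 are invoked. Everything else reduces to the routine conic-sector computation already performed for the plant side in Proposition 1.
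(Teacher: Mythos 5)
Your proposal is correct and follows essentially the same route as the paper's own proof in Appendix A: the conic-sector parameterization of $(u_c,y_c)$, the bound $\|\dot\theta_c\|_2\le\Gamma'(2\theta_{k,c})$, the substitution $\tan\theta_c\le 1/\rho_c$ from $\rho_c>0$, the treatment of the network-fed component of $u_c$ as piecewise constant so that $\|\dot u_c\|_2=\|\dot w_2\|_2$, and the final integration from $e_c(t_{c_k})=0$ combined with the triggering equality. The only difference is that you explicitly flag the piecewise-constancy of the held network signal over $[t_{c_k},t_{c_{k+1}})$ as a point needing care, whereas the paper simply asserts it; your argument is otherwise identical.
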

		\begin{proof}
given in Appendix A.
		\end{proof}
		\begin{remark}
			One can see that a larger triggering threshold $\delta_c$ leads to larger inter-event time intervals. Same direct relationship holds for the output passivity index, namely, a larger $\rho_c$ also increases the length of the inter-event time intervals. Moreover, the relationship between passivity indices and inter-event time intervals are explicitly shown in (67). 
		\end{remark}
		
		\begin{remark}
			Contrary to other proofs, in proposition 2, we aimed to solve the problem by assuming that there is an external disturbance on the controller's side, namely $w_2 \neq 0$. If we assume $w_2 = 0$ then we will have 
			\begin{align}
		&	t_{c_{k+1}}^- - t_{c_k} \geq \frac{\sqrt{\delta_c}||y_c(t_{c_{k+1}}^-))||_2}{\Gamma^\prime(2\theta_{k,c})(\frac{1}{\rho_c^2}+1)C^\prime_2}
			\end{align}
			where 	
			\begin{align}
		&\Gamma(2\theta_{k,c})=||cos^{-1}(\frac{\nu_c+\rho_c}{\sqrt{(1-4\rho_c\nu_c)+(\nu_c+\rho_c)^2}})||_2
			\end{align}
		\end{remark}
		
		\begin{remark}
			Another important aspect of the formula given in (67) is that $||\dot{w}_2(t)||_2<C^\prime_0$ and $sup_{t \in [t_{c_k},t_{c_{k+1}})}||w_2(t)||_2\leq C^\prime_1$ have reciprocal relationships with inter-event time intervals, meaning in cases with $w_2$ as the external disturbance to the system, the stronger the disturbance $w_2$ is, the shorter inter-event time intervals are and the triggering condition is met more frequently to compensate for the strong external disturbance. In other words, the relation given in (67) characterized the robustness of the event-triggering condition on the controller's side against external disturbances.
		\end{remark}	
		
		\begin{remark}
			Similar to proposition 1, the proof for proposition 2 is independent of network time-delays, and signal quantizations. However, the results in proposition 2 are obtained based on the assumption that the networked control system (given in figure \ref{fig:ETNCSDPQ}) is designed based on the criteria given in theorem 2 and theorem 3. 
		\end{remark}	
		\subsection{Robustness of the Proposed Design in the Presence of Data Losses and Packet Dropouts}
		\subsubsection{Maximum Number of Allowed Consecutive Lost Packets - Analysis for the Communication Link from The Plant to Controller}
			In this section, we analyze the robustness of the interconnection presented in figure \ref{fig:ETNCSDPQ} over unreliable band-limited communication networks. We define robustness as the design's ability to perform as expected while tolerating a certain number of consecutive packet dropouts (lost packets) over the communication network. First, we show the proof for the communication link from the plant to controller, and next we show the same analysis for the communication link from the controller to plant. 
			
			Due to the continuous nature of our design, we should clarify what we mean by packet dropouts: on each side, whenever the triggering condition is met, the plant or controller attempts to send new information over the communication link, a packet is dropped if this attempt is not successful. This can be due to lossy communication links, or unsuccessful attempts at the sending or receiving side of the process, but as long as the intention to update old information is not fulfilled, we consider the packet as a lost one. This in return means that most likely, after the respective delays, the triggering condition is met again, and another packet is sent over the communication link. As a result, in the following proofs we assume that the broadcast release times can take values $\{n_i\}_{i=0}^{i=\infty}$ for theorem 3, and $\{n_i^\prime\}_{i=0}^{i=\infty}$ for theorem 4. And the two successful release times before and after each group of packet dropouts are $t_{p_k}$ and $t_{p_{k+1}}$ for theorem 3, and $t_{c_k}$ and $t_{c_{k+1}}$ for theorem 4 where $k$ is a positive integer. With that in mind, now we can proceed to the proofs:
			\subsection{Package dropouts between the plant and controller}
			\begin{theorem}
				Consider the event-triggered networked control system given in figure \ref{fig:ETNCSDPQ} with time-delays $T_1(t)$ and $T_2(t)$, and quantization blocks $Q_p$ and $Q_c$. Let the plant and controller be Input Feed-forward Output Feedback passive with respective passivity indices $\rho_p$, $\nu_p$, and $\rho_c$, $\nu_c$. Consider that the transformation matrix $M$ is implemented according to theorem 3.	Let the triggering instances on the plant's, and controller's sides be explicitly determined by the following relations:
				\begin{align*}
					||e_p(t)||_2^2>\delta_p||y_p(t)||_2^2~~ where~~ \delta_p  \in (0,1]\\
					||e_c(t)||_2^2>\delta_c||y_c(t)||_2^2~~ where~~ \delta_c  \in (0,1]
				\end{align*} 
				Let the number of successive packet losses $d_p \in Z$ between two successful information exchanges $t_{p_k}$ and $t_{p_{k+1}}$ $($for any $t \in [t_{p_k},t_{p_{k+1}}))$ from the plant to controller satisfy
				\[D_p(\tilde{\nu}_c) =
				\begin{cases}
				d_p\leq\lfloor\log_{(1+\sqrt{\delta_p})}^{(\sqrt{\frac{2(\rho_p-\frac{1}{4\gamma})}{\alpha}}+1)}-1\rfloor & \quad \text{if } \tilde{\nu}_c \geq 0\\
				d_p\leq\lfloor\log_{(1+\sqrt{\delta_p})}^{(\sqrt{\frac{2(\rho_p+2\tilde{\nu}_c-\frac{1}{4\gamma})}{\alpha-4\tilde{\nu}_c}}+1)}-1\rfloor  & \quad \text{if } \tilde{\nu}_c < 0\\
				\end{cases}
				\]
				where $\alpha,\gamma>0$, then the networked control system is robust against $d_p$ consecutive packet dropouts and maintains finite-gain $L_2$-stability from the input $w_1(t)$ to the output $y_p(t)$.
			\end{theorem}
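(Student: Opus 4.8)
The plan is to repeat the dissipativity argument of Theorem~2 but to feed it a degraded version of the plant-side triggering bound that accounts for $d_p$ missed transmissions. Label by $n_0=t_{p_k}<n_1<\cdots<n_{d_p}<n_{d_p+1}=t_{p_{k+1}}$ the instants at which the condition $\|e_p(t)\|_2^2>\delta_p\|y_p(t)\|_2^2$ is violated and a broadcast is attempted; by hypothesis the attempts at $n_1,\dots,n_{d_p}$ are lost, so $\tilde G_c$ keeps $\tilde u_c(t)=y_p(t_{p_k})=y_p(n_0)$ for every $t\in[t_{p_k},t_{p_{k+1}})$, whereas the triggering identity still holds locally between consecutive attempts. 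First I would record, exactly as in \eqref{eq:trigpr}, that at each attempt $\|y_p(n_j)-y_p(n_{j-1})\|_2=\sqrt{\delta_p}\|y_p(n_j)\|_2$ and, for $t\in[n_i,n_{i+1})$ before the next violation, $\|y_p(t)-y_p(n_i)\|_2\le\sqrt{\delta_p}\|y_p(t)\|_2$; triangle inequalities then give $\|y_p(n_j)\|_2\le(1+\sqrt{\delta_p})\|y_p(n_{j+1})\|_2$ and $\|y_p(n_i)\|_2\le(1+\sqrt{\delta_p})\|y_p(t)\|_2$.

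Next I would bound the ``true'' error driving the loop, $e_p(t)=y_p(t)-y_p(n_0)$, by a telescoping sum $\|e_p(t)\|_2\le\|y_p(t)-y_p(n_i)\|_2+\sum_{j=1}^{i}\|y_p(n_j)-y_p(n_{j-1})\|_2$, substitute the triggering identities, iterate the inequalities above to obtain $\|y_p(n_j)\|_2\le(1+\sqrt{\delta_p})^{\,i+1-j}\|y_p(t)\|_2$, and sum the geometric series; the $\sqrt{\delta_p}$ and $-(1+\sqrt{\delta_p})$ contributions cancel, leaving for the worst case $i=d_p$
\begin{align*}
\|e_p(t)\|_2^2\le\big((1+\sqrt{\delta_p})^{d_p+1}-1\big)^2\|y_p(t)\|_2^2=:\tilde\delta_p\,\|y_p(t)\|_2^2 .
\end{align*}
This is the only change needed: the IF-OFP inequalities for $G_p$ and $\tilde G_c$, the completion-of-squares steps in $\alpha$ and $\gamma$, and the observation that dropouts on this link enter solely through $e_p$ (so the passivity indices $\tilde\rho_c,\tilde\nu_c$ produced by Theorem~3 are unaffected and $\tilde\rho_c+\nu_p-|\nu_p|-\tfrac{1}{2\alpha}>0$ still holds) carry over verbatim from Theorem~2 with $\delta_p$ replaced by $\tilde\delta_p$, so finite-gain $L_2$-stability follows as soon as $\beta(\tilde\nu_c)>\tfrac{1}{4\gamma}$ holds with this $\tilde\delta_p$.

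Finally I would translate $\beta(\tilde\nu_c)>\tfrac{1}{4\gamma}$ into a bound on $d_p$. For $\tilde\nu_c\ge0$ it reads $\rho_p-\tfrac{\tilde\delta_p\alpha}{2}>\tfrac{1}{4\gamma}$, i.e.\ $\tilde\delta_p<\tfrac{2(\rho_p-\frac{1}{4\gamma})}{\alpha}$; taking square roots (the radicand is positive under the Theorem~2 hypotheses), adding $1$, and applying $\log_{(1+\sqrt{\delta_p})}$ gives $d_p+1<\log_{(1+\sqrt{\delta_p})}\!\big(\sqrt{\tfrac{2(\rho_p-\frac{1}{4\gamma})}{\alpha}}+1\big)$, and integrality of $d_p$ yields the stated floor expression; the case $\tilde\nu_c<0$ is identical starting from $\rho_p+2\tilde\nu_c-\tilde\delta_p(\tfrac{\alpha}{2}-2\tilde\nu_c)>\tfrac{1}{4\gamma}$, where $\tfrac{\alpha}{2}-2\tilde\nu_c>0$. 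I expect the only real difficulty to be the bookkeeping in the telescoping step — keeping straight which output norm each increment is compared against and checking that the geometric sum collapses exactly to $(1+\sqrt{\delta_p})^{d_p+1}-1$; everything after $\tilde\delta_p$ is identified is a mechanical re-run of Theorem~2 plus the logarithm algebra above.
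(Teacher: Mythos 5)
Your proposal is correct and follows essentially the same route as the paper's proof: the same telescoping decomposition of $e_p(t)=y_p(t)-y_p(n_0)$ over the failed broadcast attempts, the same geometric-series collapse to the effective threshold $\bigl((1+\sqrt{\delta_p})^{d_p+1}-1\bigr)^2$, and the same substitution of this degraded threshold into the $\beta(\tilde{\nu}_c)>\frac{1}{4\gamma}$ condition of Theorem~2 followed by the logarithm to isolate $d_p$. If anything, your explicit identification of $\tilde{\delta}_p$ and the re-run of Theorem~2 makes the final inequality (79)--(80) slightly more transparent than the paper's presentation, but the argument is the same.
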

			\begin{proof}
				We assume $\delta_p$ was chosen to meet the finite-gain $L_2$-stability requirements given in theorem 2: 			   		$\beta(\tilde{\nu}_c)>\frac{1}{4\gamma}$ and $\tilde{\rho}_c+\nu_p-|\nu_p|-\frac{1}{2\alpha}>0$ where:

				\[\beta(\tilde{\nu}_c) =
				\begin{cases}
				\rho_p-\frac{\delta_p\alpha}{2}      & \quad \text{if } \tilde{\nu}_c \geq 0\\
				\rho_p+2\tilde{\nu}_c-\delta_p(\frac{\alpha}{2}-2\tilde{\nu}_c)  & \quad \text{if } \tilde{\nu}_c < 0\\
				\end{cases}
				\]
				
			    and $\alpha,\gamma>0$, we have:
				\begin{align}
				&||e_p(t)||_2\leq\sqrt{\delta}_p||y_p(t)||_2~~ where~~ \delta_p  \in (0,1]~~~ for~ any~~ t \in [t_{p_k},t_{p_{k+1}})\\
				&||e_c(t)||_2\leq\sqrt{\delta_c}||y_c(t)||_2~~ where~~ \delta_c  \in (0,1]~~~ for~ any~~ t \in [t_{c_k},t_{c_{k+1}})
				\end{align}
				
				We assume $d_p$ packets are lost in the communication network from the plant to controller for the interval $[t_{p_k},t_{p_{k+1}})$, where the communication attempts at instances $t_{p_k}$ and $t_{p_{k+1}}$ are successful. We have: $t_{p_k}=n_0<n_1<...<n_{d_p}<n_{d_p+1}=t_{p_{k+1}}$. Given that we assume all packets between $t_{p_k}$ and $t_{p_{k+1}}$ are lost and there is no successful information exchange between the plant and the controller for this interval, we can assume that the input to the controller is never updated and consequently no new information (control action) is sent back to the plant, hence the holder's output $u_r(t)$ does not change and for any $t \in [n_{d_p},n_{d_p+1}]$ we have:  	
				\begin{align}
				&||e_p(t)||_2=||y_p(t)-y_p(t_{p_k})||_2=||y_p(t)-y_p(n_0)||_2\leq\sum_{i=0}^{d_p-1}||y_p(n_{i+1})-y_p(n_i)||_2+||y_p(t)-y_p(n_{d_p})||_2
				\end{align}
			
				Assuming that the triggering condition holds, and that we have $d_p$ number of packet dropouts in the time interval $[t_{p_k},t_{p_{k+1}})$, for each of the $d_p$ terms in above equation, we have $||y_p(n_i)-y_p(n_{i+1})||_2 \leq \sqrt{\delta_p}||y_p(n_{i+1})||_2 $ for $i=0,...,d_p$. As a result, the following is true for any $t \in [n_{d_p},n_{d_p+1}]$:
				\begin{align}
				&||e_p(t)||_2\leq\sum_{i=0}^{d_p}\sqrt{\delta_p}||y_p(n_{i+1})||_2
				\end{align}
			
				Additionally we have $||e_p(t_{p_{k+1}})||_2=||y_p(n_{d_p})-y_p(n_{d_p+1})||_2\leq\sqrt{\delta_p}||y_p(n_{d_p+1})||_2$, which gives us:
				\begin{align}
				&||y_p(n_{d_p})||_2\leq(1+\sqrt{\delta_p})||y_p(n_{d_p+1})||_2
				\end{align}
				
				and this gives us:
				\begin{align}
				&||y_p(n_1)||_2\leq(1+\sqrt{\delta_p})^{d_p}||y_p(t_{p_{k+1}})||_2\leq(1+\sqrt{\delta_p})^{d_p}||y_p(n_{d_p+1})||_2
				\end{align}
				
				For $d_p$ packet dropouts, we have:
				\begin{align}
				&||e_p(t_{p_{k+1}})||_2\leq\sum_{i=0}^{d_p}
				\sqrt{\delta_p}(1+\sqrt{\delta_p})^i ||y_p(t_{p_{k+1}})||_2\leq\sum_{i=0}^{d_p}
				\sqrt{\delta_p}(1+\sqrt{\delta_p})^i ||y_p(n_{d_p+1})||_2
				\end{align}
				
				(77) is a geometric series with the ratio of $1+\sqrt{\delta_p}$, so we have the following for all $t \in [t_{p_k},t_{p_{k+1}})$:
				\begin{align}
				&||e_p(t_{p_{k+1}})||_2\leq\frac{\sqrt{\delta_p}[1-(1+\sqrt{\delta_p})^{d_p+1}]}{1-(1+\sqrt{\delta_p})}||y_p(t_{p_{k+1}})||_2\leq
				((1+\sqrt{\delta_p})^{d_p+1}-1) ||y_p(t_{p_{k+1}})||_2
				\end{align}
				
				If we look at the conditions given in theorem 2 for finite-gain $L_2$-stability, we have $\beta(\tilde{\nu}_c)>\frac{1}{4\gamma}$ and $\tilde{\rho}_c+\nu_p-|\nu_p|-\frac{1}{2\alpha}>0$ where, 	
				\[\beta(\tilde{\nu}_c) =
				\begin{cases}
				\rho_p-\frac{\delta_p\alpha}{2}      & \quad \text{if } \tilde{\nu}_c \geq 0\\
				\rho_p+2\tilde{\nu}_c-\delta_p(\frac{\alpha}{2}-2\tilde{\nu}_c)  & \quad \text{if } \tilde{\nu}_c < 0\\
				\end{cases}
				\]
				
				and $\alpha,\gamma>0$. For $\tilde{\nu}_c\geq0$, the following should stand for stability conditions to hold:
				
				\begin{align}
				&
				((1+\sqrt{\delta_p})^{d_p+1}-1) <\sqrt{ \frac{2(\rho_p-\frac{1}{4\gamma})}{\alpha}}
				\end{align}

				Hence, we should have: $d_p\leq\lfloor\log_{(1+\sqrt{\delta_p})}^{(\sqrt{ \frac{2(\rho_p-\frac{1}{4\gamma})}{\alpha}}+1)}-1\rfloor$ for $t \in [t_{p_k},t_{p_{k+1}})$ and similarly for the case that $\tilde{\nu}_c<0$ we have: 
				
				\begin{align}
				&
				((1+\sqrt{\delta_p})^{d_p+1}-1) <\sqrt{ \frac{2(\rho_p+2\tilde{\nu}_c-\frac{1}{4\gamma})}{\alpha-4\tilde{\nu}_c}}
				\end{align}
				
				 We should have the allowable number of packet dropouts: $d_p\leq\lfloor\log_{(1+\sqrt{\delta_p})}^{(\sqrt{\frac{2(\rho_p+2\tilde{\nu}_c-\frac{1}{4\gamma})}{\alpha-4\tilde{\nu}_c}}+1)}-1\rfloor$ for all $t \in [t_{p_k},t_{p_{k+1}})$ for the system to stay finite-gain stable.
			\end{proof}
			\begin{remark}
				From (79, 80), the number of allowable packet dropouts from the plant to controller has a direct relation with output passivity index $\rho_p$. This means that a networked control system with a larger plant's output passivity index will stay stable for longer time-intervals of information loss.
			\end{remark}
			\begin{remark}
				The number of allowed consecutive lost packets in a networked control system can be seen as a measure of its robustness. Namely, if a system can stay stable under a certain time-interval of lost communication, then the specific networked control design is robust against information loss.
			\end{remark}
			\begin{remark}
				Another point to make here is that a negative $\tilde{\nu}_c$ can greatly harm the robustness of the networked control system, however this should not be an issue in our design, given that we can design the $M$-Transformation matrix so that we always have a positive $\tilde{\nu}_c$.
			\end{remark}
				\begin{remark}
				Unlike most approaches in the literature, we did not assume a pre-known maximum or minimum number of packet dropouts in the communication link, our result sought to show a design-based trade-off between parameters, and robustness of the design against information loss.
				\end{remark}
		\subsubsection{Maximum Number of Allowed Consecutive Lost Packets - Analysis for the Communication Link from The Controller to Plant}

			\begin{theorem}
			Consider the event-triggered networked control system given in figure \ref{fig:ETNCSDPQ} with time-delays $T_1(t)$ and $T_2(t)$, and quantization blocks $Q_p$ and $Q_c$. Let the plant and controller be Input Feed-forward Output Feedback passive with respective passivity indices $\rho_p$, $\nu_p$, and $\rho_c$, $\nu_c$. Consider that the transformation matrix $M$ is implemented according to theorem 3.	Let the triggering instances on the plant's, and controller's sides be explicitly determined by the following relations:
			\begin{align*}
				||e_p(t)||_2^2>\delta_p||y_p(t)||_2^2~~ where~~ \delta_p  \in (0,1]\\
				||e_c(t)||_2^2>\delta_c||y_c(t)||_2^2~~ where~~ \delta_c  \in (0,1]
			\end{align*} 
				Let the number of successive packet losses $d_c \in Z$ between two successful information exchanges $t_{c_k}$ and $t_{c_{k+1}}$$($for any $t \in [t_{c_k},t_{c_{k+1}}))$ from the controller to plant satisfy
				\begin{align*}
				d_c\leq\lfloor\log_{(1+\sqrt{\delta_c})}^{\sqrt{\frac{m_{22}^2\rho_c}{(\frac{1}{2\alpha}+|\nu_p|-\nu_p)2b_c^2(1+d_2)}}+1)}-1\rfloor
				\end{align*}
				where $\alpha,\gamma>0$, then the networked control system is robust against $d_c$ consecutive packet dropouts and maintains finite-gain $L_2$-stability from the input $w_1(t)$ to output $y_p(t)$.
			\end{theorem}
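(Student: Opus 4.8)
The plan is to re-run the proof of Theorem~3 (the $M$-transformation design) with the zero-order-hold estimate on the controller-to-plant link replaced by one that accounts for $d_c$ consecutive failed broadcasts, using the same telescoping / geometric-series device that appeared in the preceding (plant-to-controller) dropout theorem. First I would record the standing hypothesis: $\delta_c$ and the entries $m_{11},m_{21},m_{22}$ have been chosen as in Theorem~3 so that, absent dropouts, $\tilde G_c$ has $\tilde\rho_c=\frac{\rho_c m_{22}^2}{2b_c^2(1+\sqrt{\delta_c})^2(1+d_2)}>\frac{1}{2\alpha}+|\nu_p|-\nu_p$ and the interconnection is finite-gain $L_2$-stable from $w_1$ to $y_p$ by Theorems~2 and~3.

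Next, consider a window $[t_{c_k},t_{c_{k+1}})$ over which $d_c$ transmissions from the controller are lost, with attempt instants $t_{c_k}=n_0'<n_1'<\cdots<n_{d_c}'<n_{d_c+1}'=t_{c_{k+1}}$, so the plant-side hold feeding $u_r$ stays frozen at $y_{q_c}(t_{c_k})=Q_c(y_c(t_{c_k}))$ throughout. The crucial step is to re-derive the inequality that in the proof of Theorem~3 read $\|y_{q_c}(t_{c_k})\|_2^2\le b_c^2(1+\sqrt{\delta_c})^2\|y_c(t)\|_2^2$. Because each attempted (but dropped) broadcast still trips the triggering relation, $\|y_c(n_i')-y_c(n_{i+1}')\|_2\le\sqrt{\delta_c}\|y_c(n_{i+1}')\|_2$ for every $i$; a telescoping sum over the $d_c$ failed attempts followed by the geometric-series collapse with ratio $1+\sqrt{\delta_c}$ (exactly as in the preceding theorem) gives $\|y_c(t)-y_c(t_{c_k})\|_2\le\big((1+\sqrt{\delta_c})^{d_c+1}-1\big)\|y_c(t)\|_2$, so that, after a triangle inequality and the passive-quantizer bound, $\|y_{q_c}(t_{c_k})\|_2^2\le b_c^2(1+\sqrt{\delta_c})^{2(d_c+1)}\|y_c(t)\|_2^2$. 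Propagating this through the estimate of $\int_{t_0}^{t}\|u_r(\tau)\|_2^2\,d\tau$ in the proof of Theorem~3, the transformed output passivity index degrades to $\tilde\rho_c^{(d_c)}=\dfrac{\rho_c m_{22}^2}{2b_c^2(1+\sqrt{\delta_c})^{2(d_c+1)}(1+d_2)}$, while the corresponding term in the formula for $\tilde\nu_c$ only increases, so a design with $\tilde\nu_c\ge0$ retains $\tilde\nu_c\ge0$ and $\beta(\tilde\nu_c)=\rho_p-\tfrac{\delta_p\alpha}{2}>\tfrac{1}{4\gamma}$ is untouched; the binding constraint is the one on $\tilde\rho_c$.

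Finally, I would impose the Theorem~2 requirement $\tilde\rho_c^{(d_c)}>\frac{1}{2\alpha}+|\nu_p|-\nu_p$ on the degraded index; clearing the $\delta_c$-powers gives $(1+\sqrt{\delta_c})^{d_c+1}<\sqrt{\dfrac{m_{22}^2\rho_c}{(\tfrac{1}{2\alpha}+|\nu_p|-\nu_p)\,2b_c^2(1+d_2)}}$, and applying $\log_{(1+\sqrt{\delta_c})}$ and the integer floor yields the stated bound on $d_c$. Because the degraded dissipation inequality is still of the form $\dot V(t)\le(\gamma+|\nu_p|-\nu_p)w_1^T(t)w_1(t)-(\beta(\tilde\nu_c)-\tfrac{1}{4\gamma})y_p^T(t)y_p(t)$ with a strictly positive output coefficient on every dropout window (and the stronger dropout-free inequality holds elsewhere), integrating over $[0,\tau]$ and taking square roots as in Theorem~2 delivers the finite-gain $L_2$ bound from $w_1$ to $y_p$, uniformly in the dropout pattern.

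The step I expect to be the main obstacle is the bookkeeping in the telescoping argument: one must use that the event detector's reference for $e_c$ is the last \emph{attempted} broadcast, not the last \emph{successfully received} one, so that the chain $\|y_c(n_i')\|_2\le(1+\sqrt{\delta_c})\|y_c(n_{i+1}')\|_2$ is legitimate across dropped packets, while the plant-side hold nonetheless freezes at the last \emph{received} value $y_{q_c}(t_{c_k})$; the exponent $d_c+1$ and the exact placement of the additive constant inside the logarithm both hinge on tracking this split the same way it was done for the plant-to-controller link. A secondary concern is to confirm that the piecewise character of $\tilde G_c$ — a worse $\tilde\rho_c$ on dropout windows, the nominal one elsewhere — does not break the single-storage-function argument, which it does not, since $V=V_p+\tilde V_c$ still works and only the coefficient multiplying $y_p^T(t)y_p(t)$ in the dissipation estimate changes.
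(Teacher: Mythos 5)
Your proposal is correct and follows essentially the same route as the paper's proof: a telescoping sum over the attempted-but-dropped broadcast instants $n'_0<\cdots<n'_{d_c+1}$, the geometric-series collapse with ratio $1+\sqrt{\delta_c}$ yielding the effective bound $\|e_c(t_{c_{k+1}})\|_2\le\big((1+\sqrt{\delta_c})^{d_c+1}-1\big)\|y_c(t_{c_{k+1}})\|_2$, and then the Theorem~3 margin on $m_{22}^2$ to cap $d_c$. Your explicit propagation of the degraded quantizer bound into $\tilde\rho_c^{(d_c)}$ is a more careful rendering of the paper's terser ``if we look at the conditions given in theorem 2 and theorem 3'' step; the only discrepancy is that your route yields $(1+\sqrt{\delta_c})^{d_c+1}<\sqrt{\tfrac{m_{22}^2\rho_c}{(\frac{1}{2\alpha}+|\nu_p|-\nu_p)2b_c^2(1+d_2)}}$, i.e.\ without the $+1$ inside the logarithm, so the (slightly more conservative) admissible range for $d_c$ is what your argument actually establishes, whereas the stated bound rests on the paper's marginally looser inequality $(1+\sqrt{\delta_c})^{d_c+1}-1<\sqrt{\,\cdot\,}$.
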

			\begin{proof}
				We assume $\delta_c$ was chosen to meet the finite-gain $L_2$-stability requirements given in theorem 2 and design requirements in theorem 3, as a result we have the following upper bound for $(1+\sqrt{\delta_c})$: 
				
				\begin{align*}
				&m_{22}^2 > \frac{(\frac{1}{2\alpha}+|\nu_p|-\nu_p)(2b_c^2(1+\sqrt{\delta_c})^2(1+d_2))}{\rho_c}\\
				& \sqrt{\frac{m_{22}^2\rho_c}{(\frac{1}{2\alpha}+|\nu_p|-\nu_p)2b_c^2(1+d_2)}}>(1+\sqrt{\delta_c})
				\end{align*}				   	
				
				and $\alpha,\gamma>0$. We also have:
				\begin{align}
				&||e_p(t)||_2\leq\sqrt{\delta}_p||y_p(t)||_2~~ where~~ \delta_p  \in (0,1]~~~ for~ any~~ t \in [t_{p_k},t_{p_{k+1}})\\
				&||e_c(t)||_2\leq\sqrt{\delta_c}||y_c(t)||_2~~ where~~ \delta_c  \in (0,1]~~~ for~ any~~ t \in [t_{c_k},t_{c_{k+1}})
				\end{align}
				
				We consider $d_c$ as the number of packets that are lost in the communication network from the controller to plant for the interval $[t_{c_k},t_{c_{k+1}})$, so we have: $t_{c_k}=n^\prime_0<n^\prime_1<...<n^\prime_{d_c}<n^\prime_{d_c+1}=t_{c_{k+1}}$. For any $t \in [n^\prime_{d_c},n^\prime_{d_c+1}]$ we have:  	
				\begin{align}
				&||e_c(t)||_2=||y_c(t)-y_c(t_{c_k})||_2=||y_c(t)-y_c(n^\prime_0)||_2\leq\sum_{i=0}^{d_c-1}||y_c(n^\prime_{i+1})-y_c(n^\prime_i)||_2+||y_c(t)-y_c(n^\prime_{d_c})||_2
				\end{align}
				
				Assuming that the triggering condition holds, and that we have $d_c$ number of packet dropouts in the time interval $[t_{c_k},t_{c_{k+1}})$, for each of the $d_c$ terms in above equation, we have $||y_c(n^\prime_i)-y_c(n^\prime_{i+1})||_2 \leq \sqrt{\delta_c}||y_c(n^\prime_{i+1})||_2$ for $i=0,...,d_c$. As a result, the following is true for any $t \in [n^\prime_{d_c},n^\prime_{d_c+1}]$:
			
				\begin{align}
				&||e_c(t)||_2\leq\sum_{i=0}^{d_c}\sqrt{\delta_c}||y_c(n^\prime_{i+1})||_2
				\end{align}
				
				Additionally, we have $||e_c(t_{c_{k+1}})||_2=||y_c(t_{c_{k+1}})-y_c(n^\prime_{d_c})||_2\leq||y_c(n^\prime_{d_c})-y_c(n^\prime_{d_c+1})||_2\leq\sqrt{\delta_c}||y_c(n^\prime_{d_c+1})||_2$, which gives us:
				\begin{align}
				&||y_c(n^\prime_{d_{c}})||_2\leq(1+\sqrt{\delta_c})||y_c(n^\prime_{d_c+1})||_2
				\end{align}
				
				and:
				\begin{align}
				&||y_c((n^\prime_1)||_2\leq(1+\sqrt{\delta_c})^{d_c}||y_c(t_{c_{k+1}})||_2\leq(1+\sqrt{\delta_c})^{d_c}||y_c(n^\prime_{d_{c}+1})||_2
				\end{align}
				
				For $d_c$ packet dropouts we have:
				\begin{align}
				&||e_c(t_{c_{k+1}})||_2\leq\sum_{i=0}^{d_c}
				\sqrt{\delta_c}(1+\sqrt{\delta_c})^i ||y_c(t_{c_{k+1}})||_2\leq\sum_{i=0}^{d_c}
				\sqrt{\delta_c}(1+\sqrt{\delta_c})^i ||y_c(n^\prime_{d_{c}+1})||_2
				\end{align}
				
				(87) is a geometric series with the ratio of $1+\sqrt{\delta_c}$, so we have the following for all $t \in [t_{c_k},t_{c_{k+1}})$:
				\begin{align}
				&||e_c(t_{c_{k+1}})||_2\leq\frac{\sqrt{\delta_c}[1-(1+\sqrt{\delta_c})^{d_c+1}]}{1-(1+\sqrt{\delta_c})}||y_c(t_{c_{k+1}})||_2\leq
				((1+\sqrt{\delta_c})^{d_c+1}-1) ||y_c(t_{c_{k+1}})||_2
				\end{align}
				
				If we look at the conditions given in theorem 2, and theorem 3, we should have 
				
				\begin{align}
				&
				((1+\sqrt{\delta_c})^{d_c+1}-1) <\sqrt{\frac{m_{22}^2\rho_c}{(\frac{1}{2\alpha}+|\nu_p|-\nu_p)2b_c^2(1+d_2)}}
				\end{align}
				
				For (89) to hold, we should have: $d_c\leq\lfloor\log_{(1+\sqrt{\delta_c})}^{\sqrt{\frac{m_{22}^2\rho_c}{(\frac{1}{2\alpha}+|\nu_p|-\nu_p)2b_c^2(1+d_2)}}+1)}-1\rfloor$ for any $t \in [t_{c_k},t_{c_{k+1}})$ which proves the theorem.
				
			\end{proof}
			\begin{remark}
				As you can see from the results, the number of allowed consecutive packet dropouts for the communication link from the controller to plant has a direct relation with output passivity index $\rho_c$, namely, a networked control system with a controller with a larger output passivity index stays stable under longer time-intervals of information loss.
			\end{remark}
			\begin{remark}
				The number of allowed lost packets in a controlled networked system can be seen as a measure of its robustness. Namely, if a system can stay stable under a certain time-interval of lost communication, then the specific networked control design is robust against information loss.
			\end{remark}
			\begin{remark}
				(89) characterizes the relationship between $d_c$ and $m_{22}^2$. This means that the designer of the $M$-Transformation matrix, by selecting $m_{22}^2$, has some control over the interconnection's robustness against packet losses in the communication link from the controller to plant. 
		\end{remark}
		\begin{remark}
				Comparing the relations given in theorem 4 and theorem 5, we find out that our proposed networked control design is more tolerate toward packet losses in the communication link from the controller to plant, compared to packet losses in the communication link from the plant to controller. This is in line with a similar observation made in theorem 3, where the triggering condition in the communication link from the controller to plant is usually less conservative resulting in a lower communication rate, compared to the triggering condition and communication rate in the communication link from the plant to controller. 
			\end{remark}
				\begin{remark}
					Similar to the previous proof, we did not assume a pre-known maximum or minimum number of packet dropouts in the communication link, our result shows a design-based trade-off between parameters, and robustness of the design against information loss. For example, for more unreliable communication links, it would be more prudent to select a larger value for $m_{22}$, the trade-off in this case is that this will result in a smaller input passivity index $\tilde{\nu}_c$, however the design will be more robust against packet dropouts.
				\end{remark}
			\clearpage
\section{Simulation Results}
\label{sec:simulation}
We consider an event-triggered networked control interconnection of two Input Feed-forward Output Feedback passive (IFOFP) systems over a band-limited networked interconnection that follows the structure given in figure \ref{fig:ETNCSDPQ}. We assume the following network induced delays: an initial time-delay of $T_1(0)=0.5 s$ with an increasing rate of change of $d_1=0.3$ in the communication link from the plant to controller, and an initial time-delay of $T_2(0)=0.6s$ with an increasing rate of change of $d_2=0.2$ in the communication link from the controller to plant. The quantizers used on both sides are uniform mid-tread quantizers with quantization level of 0.5 ($a_c=a_p=0$, $b_c=b_p=2$). Additionally, we assume that the network interconnections are lossy and not all communication attempts are successful, as a result we assume that there are random packet dropouts between two successful information exchanges.

The nonlinear IFOFP plant has the dynamics:
	\begin{align*} 
	&\dot{x}_{p1}(t)=-3x^3_{p1}(t)+x_{p1}(t)x_{p2}(t)\\
	&\dot{x}_{p2}(t)=-3.6x_{p2}(t)+2u_p(t)\\
	&~y_p(t)=x_{p2}(t),
	\end{align*}
	given the storage function $V(x)=\frac{1}{4}x_{p2}^2$, the plant has passivity levels $\rho_p=1.8$ and $\nu_p=0$. The initial states of the plant are $x_{p1}(0)=10$, $x_{p2}(0)=-14$. 
	
	The  IFOFP linear controller has the dynamics:
	\begin{align*} 
	&\dot{x}_{c1}(t)=-3x_{c}(t)+u_c(t)\\
	&~ y_c(t)=7x_{c}(t)+u_c(t),
	\end{align*}
	with passivity levels $\rho_c=0.27$ and $\nu_c=0.49$ according to its Nyquist and inverse Nyquist plots (the Nyquist and inverse Nyquist plots of a passive linear system with transfer function $G(s)$ lie on the closed right half-side of the complex plane and their intersections with the real axis determine passivity indices of the system \cite{bao2007process}).
	
	Now we will choose our design parameters according to theorem 2 and theorem 3. First, we select $\delta_c=0.15$ for the event-triggering condition on the controller's side. Next, we pick $\gamma=\frac{1000}{4}$ and $\alpha=1$, assuming that we will design the entries for the $M$-transformation matrix to have $\tilde{\nu}_c>0$, and given theorem 2, by selecting $\delta_p=0.4$ we will have $\beta(\tilde{\nu}_c)>\frac{1}{4\gamma}$. Our triggering conditions become:
	\begin{align*}
	&||e_p(t)||_2^2>0.40||y_p(t)||_2^2\\
	&||e_c(t)||_2^2>0.15||y_c(t)||_2^2
	\end{align*} 
	Next, we need to select $m^2_{22}$ such that $\tilde{\rho}_c+\nu_p-|\nu_p|-\frac{1}{2\alpha}>0$ where the relationship between $\tilde{\rho}_c$ and $m^2_{22}$ is given in theorem 3. We choose $m^2_{22}=49.46$, and consequently have $m_{22}=7.03$, this gives us $m_{21}=-4.86$, and finally we select $m_{11}=0.16$ to have the positive $\tilde{\nu}_c=0.03$ and $\tilde{\rho}_c=0.72$. According to theorems 4 and 5, our specific design is robust against random single packet dropouts between each successful information exchange in the communication link from the plant to controller, and the communication link from the controller to the plant: $d_p=1\leq\lfloor\log_{1.63}^{(\sqrt{3.59}+1)}-1\rfloor $ and $d_c=2\leq\lfloor\log_{1.38}^{(\sqrt{2.78}+1)}-1\rfloor $. In our simulation, we assumed a uniform distribution for packet dropouts with the probability of $\frac{1}{2}$ for unsuccessful information transformations. The distribution of successful/unsuccessful attempts over the course of our experiment is depicted in figure \ref{fig:ETDO}: we can see that the communication link from the plant to controller is robust against one packet dropout, and the communication link from the controller to plant is robust against 2 consecutive packet dropouts. Additionally, the inter-event times for the event-triggering condition on the plant's side and controller's side are given in figure \ref{fig:ETDO}.  The external disturbance signal $w_1(t)$ applied to the plant follows a randomly uniform distribution on the interval $[0,2]$. The states response for our design is given in figure \ref{fig:states}: given $y_p(t)=x_{p2}(t)$, we can see that the networked control system is finite-gain $L_2$-stable. Figure \ref{fig:outputs} gives a perspective into the effects of time-varying delays on the outputs from the plant to controller, and the controller to plant: it is important to mention that after delay signals depicted in figure \ref{fig:outputs} are actual signals across the ZOH blocks that are fed into the controller and plant, and the before delay signals show what the expected evolution of the output signals would have been in the absence of the network induced delays. As our last remark, we compare figures \ref{fig:ETDO}, and \ref{fig:outputs} to see the effects of quantization; namely, while from figure \ref{fig:ETDO} we observe that the triggering condition is still met several times toward the end of the experiment, from figure \ref{fig:outputs} we can observe that control actions received by the plant do not change. This is due to the fact that the magnitudes of updated control actions are small and close enough that they lie on the same quantization interval resulting in the same final quantized values.
		\begin{figure}[!t]
			\centering
			\includegraphics[scale = .55]{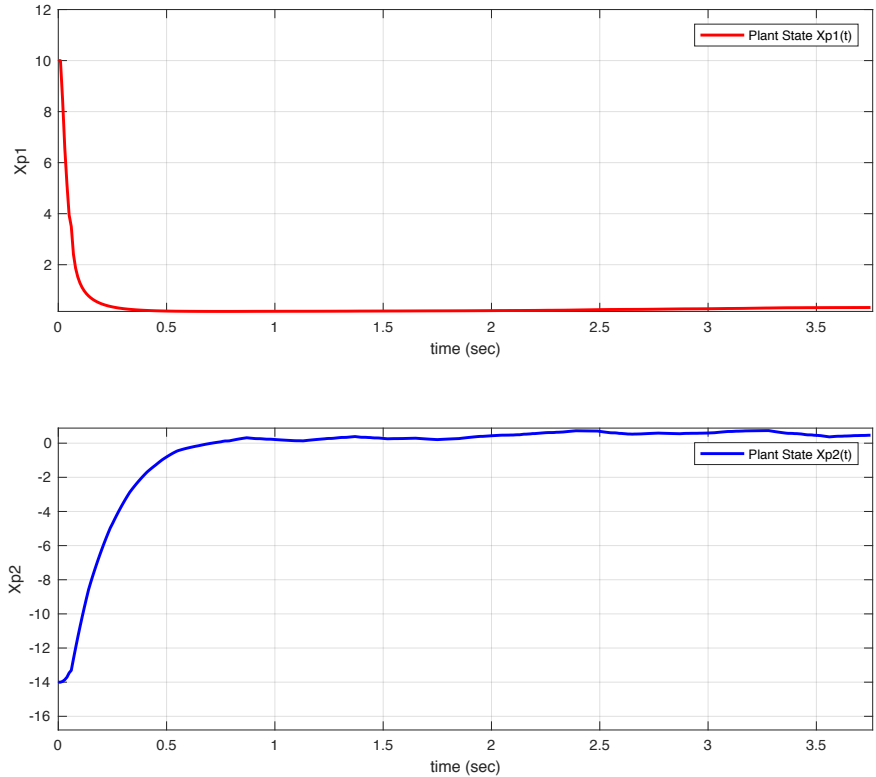}
			\caption{The States Response for the proposed Networked Control Design.}
			\label{fig:states}
		\end{figure}
			\begin{figure}[!t]
				\centering
				\includegraphics[scale = .5]{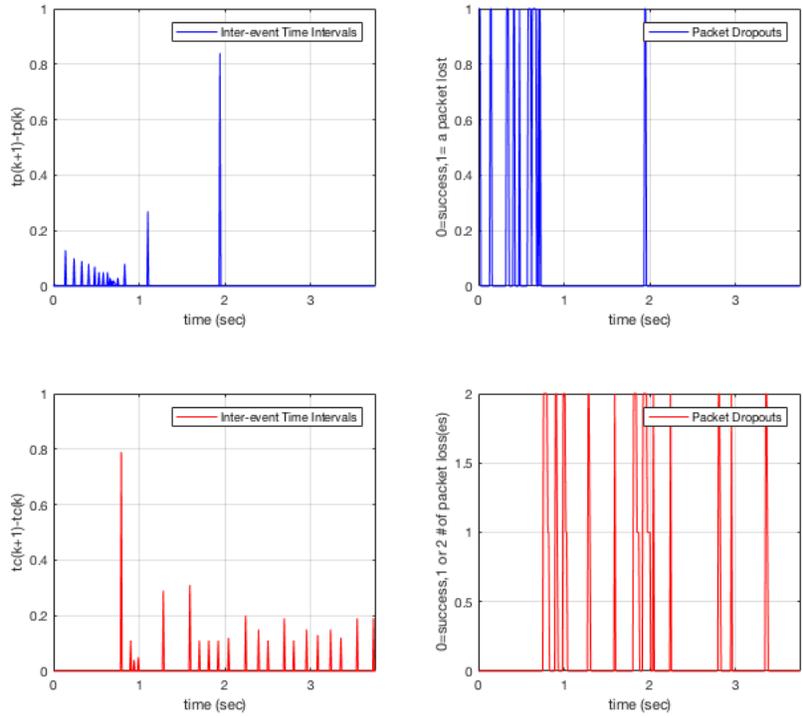}
				\caption{The evolution of Inter-event Time Intervals and Packet Dropouts.}
				\label{fig:ETDO}
			\end{figure}
				\begin{figure}[!t]
					\centering
					\includegraphics[scale = .5]{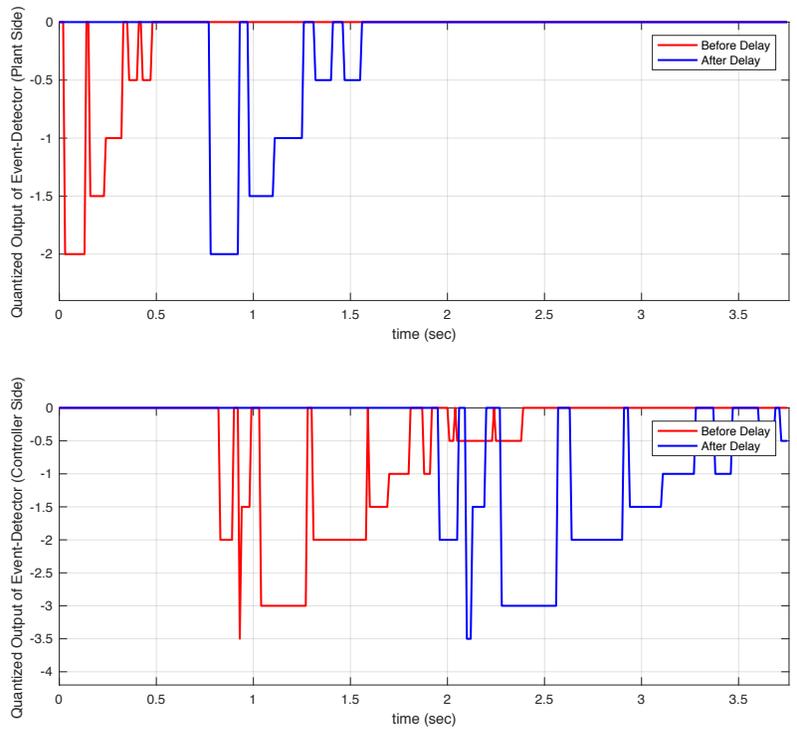}
					\caption{The evolution of Quantized plant's and controller's outputs with time-varying network induced time-delays .}
					\label{fig:outputs}
				\end{figure}
\clearpage			
\section{Conclusion}
\label{sec:conclusion}		
In this report, we intended to propose a design-based framework for finite-gain $L_2$-stability of event-triggered networked control systems. We addressed the robustness issues as well as stability conditions for our design. To best of our knowledge, our proposed design appears to be the most comprehensive approach existing in the networked control literature, in the sense that we consider the effects of signal quantizations, time-varying delays and packet dropouts altogether. Additionally, we proposed a novel conic-based analysis for inter-event time intervals in event-triggered networked control systems consisting of Input Feed-forward Output Feedback subsystems to show the robustness of our design against undesirable "Zeno" behaviors. Our work presented in this report is important because we propose stability conditions for event-triggered designs in networked control systems by considering event-triggering conditions on both sides of the interconnection; this leads to a considerable decreases in the communication load on shared networks. Moreover, we address the effects of signal quantizations and time-varying delays which are unavoidable when information is sent over communication networks. One benefit of our passivity-based approach is its sole reliance on input-output relations of subsystems, this means that we can characterize the stability conditions for a large class of systems with unknown exact dynamics. Contrary to existing input-to-state stability results for NCS in the literature which require systems with only observable states, our stability results can be applied to dynamical systems with unobservable states as well. In addition, our approach is design-based which leaves a lot of leeway for the designer to select different design parameters based on stability, desirable communication rate, performance and robustness requirements. 
\clearpage

\appendix
\section{Proof for Proposition 2} \label{App:AppendixA}

		\begin{proof}
			Our proof is similar to our previous proof for the lower-bound of plant's inter-event time intervals. $G_c$ in figure \ref{fig:ETNCSDPQ} with  passivity indices of $\nu_c$, $\rho_c$ occupy a conic sector in the input-output space which is defined by its center line $\theta_z$ and its apex angle $2\theta_{k,c}$. More importantly, at any instance $t$, the input and output lie within the conic sector:
			\begin{align}
			&\theta_c(t) \in [\theta_z-\theta_{k,c},\theta_z+\theta_{k,c}]
			\end{align}
			
			Additionally, we can define the relationship between the input and output with $\theta_c(t)$ by parameterizing them in polar coordinates:
			
			\begin{align}
			&u_c(t)=r_c(t)cos(\theta_c(t))\\
			&y_c(t)=r_c(t)sin(\theta_c(t)).
			\end{align}
			
			Furthermore, we have:
			\begin{align}
			&cot(2\theta_z)=\nu_c-\rho_c,~~ \theta_z \in [0, \frac{\pi}{2}]\\
			&cos(2\theta_{k,c})=\frac{\nu_c+\rho_c}{\sqrt{(1-4\rho_c\nu_c)+(\nu_c+\rho_c)^2}},~~\theta_{k,c} \in [0,\frac{\pi}{2}).
			\end{align}
			
			Since $e_c(t)=y_c(t)-y_c(t_{c_k})$ for all $t \in [t_{c_k},t_{c_{k+1}})$, we have the following for any $t \in [t_{c_k},t_{c_{k+1}})$:
			\begin{align}
			&\frac{d}{dt}||e_c(t)||_2\leq||\dot{e}_c(t)||_2=||\dot{y}_c(t)||_2=||\dot{r}_c(t)sin(\theta_c(t))||_2+||r_c(t)\dot{\theta}_c(t)cos(\theta_c(t))||_2.
			\end{align}
			
			Given (90), we have $||\dot{\theta}_c(t)||_2\leq||2\theta_{k,c}||_2$, where:
			\begin{align}
			&||2\theta_{k,c}||_2=||cos^{-1}(\frac{\nu_c+\rho_c}{\sqrt{(1-4\rho_c\nu_c)+(\nu_c+\rho_c)^2}})||_2=\Gamma^\prime(2\theta_{k,c})
			\end{align}
			
			Hence we have:
			\begin{align}
			&\frac{d}{dt}||e_c(t)||_2\leq||\dot{e}_c(t)||_2=||\dot{y}_c(t)||_2=||\dot{r}_c(t)sin(\theta_c(t))||_2+||r_c(t)\dot{\theta}_c(t)cos(\theta_c(t))||_2\\
			&~~~~~~~~~~~~~~\leq||\dot{r}_c(t)sin(\theta_c(t))||_2+\Gamma^\prime(2\theta_{k,c})||u_c(t)||_2.
			\end{align}
			
			Furthermore, we have:
			\begin{align}
			&||\dot{r}_c(t)||_2=||\frac{d}{dt}(\frac{u_c(t)}{cos(\theta_c(t))})||_2=||\frac{\dot{u}_c(t)cos(\theta_c(t))+u_c(t)\dot{\theta}_c(t)sin(\theta_c(t))}{cos^2(\theta_c(t))}||_2
			\end{align}	
			
			Inserting the above in (98), we have:
			\begin{align}
			&\frac{d}{dt}||e_c(t)||_2\leq||\dot{u}_c(t)tan(\theta_c(t))+u_c(t)\Gamma^\prime(2\theta_{k,c})tan^2(\theta_c(t))||_2+\Gamma^\prime(2\theta_{k,c})||u_c(t)||_2
			\end{align}
			
			Given $\rho_c>0$, we have $\frac{y_c(t)}{u_c(t)}=tan(\theta_c(t))\leq\frac{1}{\rho_c}$ \cite{khalil2002nonlinear}. Given that $y_{q_p}(t)$ takes constant piece-wise values over the time-interval $[t_{c_k},t_{c_{k+1}})$: $||\dot{y}_{q_p}(t)||_2=0$ and we can denote $sup_{t \in [t_{c_k},t_{c_{k+1}})}||y_{q_p}(t)||_2\leq C^\prime_2$ (it is important to note that the proof does not depend on the network delay, $y_{q_p}(t)$ can encompass the delayed or non-delayed version of the signal depending on the network). As a results we have $||u_c(t)||_2=||w_2(t)+y_{q_p}(t)||_2\leq ||w_2(t)||_2+||y_{q_p}(t)||_2 \leq C^\prime_1+C^\prime_2$, and  we have $||\dot{u}_c(t)||_2=||\dot{w}_2(t)||_2\leq C^\prime_0$. We have:
			\begin{align}
			&\frac{d}{dt}||e_c(t)||_2\leq||\dot{u}_c(t)||_2||tan(\theta_c(t))||_2+\Gamma^\prime(2\theta_{k,c})||u_c(t)||_2||tan^2(\theta_c(t))||_2+\Gamma^\prime(2\theta_{k,c})||u_c(t)||_2\\
			&~~~~~~~~~~~~~~\leq \frac{C^\prime_0}{\rho_c}+\Gamma^\prime(2\theta_{k,c})(\frac{1}{\rho_c^2}+1)(C^\prime_1+C^\prime_2)
			\end{align}
			
			We can find a positive lower-bound for inter-event time intervals between each two triggered instances by integrating from both sides over the time interval $[t_{c_k},t_{c_{k+1}})$ and with the initial condition $e_c(t_{c_k})=0$ given that after each triggering instance the error is reset to zero. We will have:
			
			\begin{align}
			&	t_{c_{k+1}}^- - t_{c_k} \geq \frac{||e_c(t_{c_{k+1}}^-)||_2}{\frac{C^\prime_0}{\rho_c}+\Gamma^\prime(2\theta_{k,c})(\frac{1}{\rho_c^2}+1)(C^\prime_1+C^\prime_2)}
			\end{align}
			
			The next triggering-condition is met at $t_{c_{k+1}}$ meaning $||e_c(t_{c_{k+1}})||_2^2>\delta_c||y_c(t_{c_{k+1}})||_2^2$, this gives us $||e_c(t_{p_{k+1}}^-)||_2=\sqrt{\delta_c}||y_c(t_{c_{k+1}}^-)||_2$:
			\begin{align}
			&t_{c_{k+1}}^- - t_{c_k} \geq \frac{\sqrt{\delta_c}||y_c(t_{c_{k+1}}^-))||_2}{\frac{C^\prime_0}{\rho_c}+\Gamma^\prime(2\theta_{k,c})(\frac{1}{\rho_c^2}+1)(C^\prime_1+C^\prime_2)}
			\end{align}
			
			where 	
			\begin{align}
			&\Gamma(2\theta_{k,c})=||cos^{-1}(\frac{\nu_c+\rho_c}{\sqrt{(1-4\rho_c\nu_c)+(\nu_c+\rho_c)^2}})||_2
			\end{align}
			
			which proves the proposition. 
		\end{proof}

\clearpage
\bibliographystyle{IEEEtran}
\bibliography{bibfile}
\end{document}